\providecommand{\tabularnewline}{\\}
\theoremstyle{plain}
\newtheorem{thm}{\protect\theoremname}
\theoremstyle{plain}
\newtheorem{lem}[thm]{\protect\lemmaname}
\theoremstyle{plain}
\newtheorem{prop}[thm]{\protect\propositionname}
\providecommand{\tabularnewline}{\\}
\theoremstyle{plain}
\providecommand{\lemmaname}{Lemma}
\providecommand{\propositionname}{Proposition}
\providecommand{\theoremname}{Theorem}
\providecommand{\lemmaname}{Lemma}
\providecommand{\propositionname}{Proposition}
\providecommand{\theoremname}{Theorem}
\providecommand{\lemmaname}{Lemma}
\providecommand{\propositionname}{Proposition}
\providecommand{\theoremname}{Theorem}
\begin{document}

\title{A Convex Relaxation Approach to Higher-Order Statistical Approaches
to Signal Recovery}

\author{Huy-Dung Han$^{(1)}$, Zhi Ding$^{(2)}$, Muhammad Zia$^{(3)}$ %
\thanks{This material is based upon work supported by National Science Foundation under Grants ECCS-1307820, CNS-1443870, CNS-1457060 and by the Vietnam Education Foundation. %
}\\
 {\small $^{(1)}$ Hanoi University of Science and Technology,
          Department of Electronics and Computer Engineering,
					Hanoi, Vietnam\\
$^{(2)}$  University of California,
          Department of Electrical and Computer Engineering,
          Davis, CA, USA \\
$^{(3)}$  Quaid-i-Azam University, 
					Department of Electronics, 
					Islamabad, Pakistan \\
email:\{hdhan,zding,mzia\}@ucdavis.edu
}
}
\maketitle
\begin{abstract}
In this work, we investigate an efficient numerical approach for solving
higher order statistical methods for blind and semi-blind signal
recovery from non-ideal channels. We develop numerical algorithms
based on convex optimization relaxation for minimization of higher
order statistical cost functions. The new formulation through convex
relaxation overcomes the local convergence problem of existing gradient
descent based algorithms and applies to several well-known cost functions
for effective blind signal recovery including blind equalization and
blind source separation in both single-input-single-output (SISO)
and multi-input-multi-output (MIMO) systems. We also propose a fourth order
pilot based cost function that benefits from this approach. The simulation results
demonstrate that our approach is suitable for short-length packet
data transmission using only a few pilot symbols. 
\end{abstract}
\noindent \textbf{Keywords: } Blind signal recovery, semi/blind channel equalization, convex
optimization, semidefinite programming, rank 1 approximation. 

\section{Introduction\label{sec:CO-Introduction}}

Blind signal recovery is a well-known problem in signal processing
and communications. With a special goal of recovering unknown input
signals to unknown linear systems based on the system output signals,
this problem typically manifests itself either as blind equalization
or blind source separation. In SISO systems or single-input-multiple-output
(SIMO) systems, the goal of channel equalization is to undo the
inter-symbol interference (ISI). In MIMO systems or in source separation,
the objective is to mitigate both the ISI and the inter-channel
interference (ICI) in order to successfully separate different sources.
The advantage lies in the fact that blind algorithms do not require
allocation of extra bandwidth to training signals whereas semiblind
algorithms can substantially reduce the length of training signals.
System implementations based on blind algorithms have appeared in
downstream cable modem \cite{DOCSIS}, HDTV \cite{Johnson1998,Choi1993,Werner1999}.

The specially designed cost functions for blind signal recovery typically
pose a challenge to the issue of global convergence and convergence
speed. In literature, global convergent algorithms for blind
equalization and source separation do exist for linear programming
equalization \cite{Ding2000} and space-time signal detection \cite{Muhammad2010}.
However, without limiting system input signals to the special QAM
class, most of the blind algorithms are based on non-quadratic and
non-convex costs that utilize high-order statistics of the received
signals. Well-known algorithms of this type include the constant
modulus algorithm (CMA) \cite{Godard1980,Treichler1983}, the Shalvi-Weinstein
algorithm (SWA) \cite{Shalvi1990}, and the minimum entropy deconvolution
(MED) algorithm \cite{Wiggins1978,Donoho1981}. In fact, without modifications,
these stochastic gradient descent (SGD) algorithms typically admit
multiple local minima \cite{Kennedy1990,Li1995} and require large
numbers of data and iterations to converge.

To improve convergence speed of blind channel equalization techniques,
batch algorithms can effectively utilize the statistical information
of the channel output signals. They can significantly shorten the
convergence time \cite{Agee1986,Pickholtz1993,Regalia2002,Chen2004,Zarzoso2008,Satorius1993}.
Unfortunately, local convergence remains a major obstacle, that
requires good initialization strategies. Another approach to mitigate
the local convergence problem is through relaxation by ``lifting''
the receiver parameters to a new parameterization that covers a larger
parameter space over which the cost function is convex \cite{Dogancay1999,Maricic2003}.
Once the convex problem is uniquely solved, the solution is mapped
back to the original restrictive parameter space. For example, the
authors of \cite{Dogancay1999} relaxed the outer-product matrix of
the equalizer parameter space from a rank-1 matrix to an unrestricted
square matrix over which the CMA cost becomes quadratic and can be
solved globally via least squares (LS). From the relaxed solution,
a series of linear operations map the solution back into the right equalizer
coefficient space. In \cite{Maricic2003}, the authors proposed a
similar relaxation with a much more elegant algorithm that modifies
the CMA cost function into a special sum. The modified CMA cost is
first solved over a more restrictive semi-definite positive matrix
space. The resulting convex optimization problem is then solved efficiently
via semidefinite programing (SDP). Iterative mappings must also follow
the SDP solution in order to determine the equalizer parameters \cite{Maricic2003}.

In this work, we study means to improve the convergence of general
blind and semiblind equalization algorithms by generalizing the cost
modification principle presented in \cite{Maricic2003}. We can therefore
develop a more general batch implementation of well-known blind equalization
and source separation algorithms that minimize cost functions involving
the fourth order statistics of system output signals. More specifically,
the new implementation leverages a convex optimization relaxation
that can be applied to CMA, SWA, and MED algorithms. We show that
our convex formulation requires less resource and improves the efficiency
of the convex formulation in \cite{Maricic2003}. We further generalize
the formulation to accommodate the semi-blind algorithms when a small
number of training symbols are available to assist the receiver signal
recovery and separation. Our proposed method overcomes the local convergence
problem which is a drawback of traditional gradient descent implementations.


The rest of the paper is organized as follows. In Section II, we introduce
MIMO system model. Section III discusses batch MIMO blind source recovery
using CMA cost as an example of real-valued fourth order functions.
Section IV presents a convex formulation of the fourth order function
and algorithm to find its global minima. In Section V, we extend formulation
to other blind algorithms based on fourth order cost. A
semiblind algorithm using fourth order function is proposed in Section
VI. In Section VII, we present our simulation results and Section
VIII contains some concluding remarks. In the Appendices, we include
the formulation of converting cross-correlation cost and the training
based cost into real-valued fourth order functions.






\section{Problem Statement and System Model \label{sec:CO-System Model}}

We consider a baseband MIMO system model with $N_{\text{T}}$ transmit
antennas and $N_{\text{R}}$ receive antennas. This MIMO model covers
SIMO and SISO channels. Let $s_{n}\left(k\right)$, $n=1,\ldots,N_{\text{T}}$
denote random independent data symbols for transmit antenna $n$ at
time $k$. $\{s_{n}\}$ typically belong to finite set ${\cal A}$.
Consistent with practical QAM systems, we assume that $s_{n}\left(k\right)$
are mutually independent and identically distributed (i.i.d.) with
variance $\sigma_{s}^{2}$ and fourth order kurtosis $\gamma_{s}<0$.

The data streams are transmitted over multipath MIMO channels denoted
by impulse responses $h_{j,n}\left(m\right)$, $n=1,\ldots,N_{\text{T}}$,
$j=1,\ldots,N_{\text{R}}$, $m=0,\ldots,L_{h}$, with delay spread of $L_{h}+1$ samples.
Assuming the MIMO channel is corrupted by i.i.d. additive white Gaussian
noise (AWGN) $v_{j}\left(k\right)$ with zero-mean and variance $\sigma_{v}^{2}$,
the output of $j$-th receiver can be expressed as 
\begin{eqnarray}
x_{j}\left(k\right) & = & \sum_{n=1}^{N_{\text{T}}}\sum_{m=0}^{L_{h}}h_{j,n}\left(m\right)s_{n}(k-m)+v_{j}\left(k\right). 
\end{eqnarray}

The goal of the MIMO receiver is to recover data stream $s_{n}\left(k\right)$,
$n=1,\ldots,N_{\text{T}}$ from the channel output $\{x_{j}\left(k\right)\}$.
Following most of the works in this area, we shall focus only on linear
FIR equalizers. Given $N_{\text{R}}$ sub-channel outputs, we apply
a $N_{\text{T}}\times N_{\text{R}}$ MIMO equalizer with parameter vector 
\[
\mathbf{w}_{i,j}=\left[w_{i,j}\left(0\right)\: w_{i,j}\left(1\right)\ldots w_{i,j}\left(L_{w}\right)\right]^{T}
\]
where $L_{w}$ is considered as the order of individual equalizer and $i=1,\ldots,N_{\text{T}}$, $j=1,\ldots,N_{\text{R}}$.
If the channel is flat fading (or non-frequency-selective), then we
have a degenerated problem of blind source separation for which $L_{h}=0$.
The FIR blind MIMO equalizer then degenerates into a linear source
separation matrix $\mathbf{W}=[w_{i,j}(0)]$.

The linear receiver output has $N_{\text{T}}$ parallel output streams
denoted by 
\begin{align}
y_{i}\left(k\right) &=\sum_{j=1}^{N_{\text{R}}}w_{i,j}^{*}\left(k\right)\circledast x_{j}(k) \nonumber \\
 &=\sum_{j=1}^{N_{\text{R}}}\sum_{\ell=0}^{L_{w}}w_{i,j}^{*}\left(\ell\right)x_{j}(k-\ell)
\end{align}

where $\circledast$ denote the convolution. Our objective is to optimize
the linear receiver parameters $\{w_{i,j}(n)\}$ such that the source
data symbols are recovered by $y_{i}\left(k\right)$ without interference
as follows 
\[
y_{i}\left(k\right)=e^{j\phi_{i}}s_{q_{i}}(k-k_{i})+\mbox{residual noise}.
\]
Note that $\phi_{i}$ is a phase ambiguity that is inherent to the
blind signal recovery which cannot be resolved without additional
information; and $k_{i}$ is the output delay of the $q_{i}-$th signal
recovery that does not affect the receiver performance. Upon optimum
convergence, a simple memoryless decision device $\mbox{dec}(\cdot)$
can be applied, and at high signal to noise ratio (SNR), we have 
\[
\hat{s}_{i}(k-k_{i})=\mbox{dec}\left[y_{i}(k)\right]=s_{q_{i}}(k-k_{i}).
\]

For convenience of notation, we further define 
\begin{align*}
\mathbf{w}_{i} & =\left[\mathbf{w}_{i,1}^{T}\:\mathbf{w}_{i,2}^{T}\ldots\mathbf{w}_{i,N_{\text{R}}}^{T}\right]^{T},\\
\mathbf{x}_{j}\left(k\right) & =\left[x_{j}\left(k\right)\: x_{j}\left(k-1\right)\ldots x_{j}\left(k-L_{w}\right)\right]^{T},\\
\mathbf{x}\left(k\right) & =\left[\mathbf{x}_{1}^{T}\left(k\right)\:\mathbf{x}_{2}^{T}\left(k\right)\ldots\mathbf{x}_{N_{\text{R}}}^{T}\left(k\right)\right]^{T}.
\end{align*}
With these notations, we can write 
\begin{align}
y_{i}\left(k\right) & =\sum_{j=1}^{N_{\text{R}}}\mathbf{w}_{i,j}^{H}\mathbf{x}_{j}(k)=\mathbf{w}_{i}^{H}\mathbf{x}(k)\nonumber \\
 & =\sum_{n=1}^{N_{\text{T}}}\underbrace{\left(\sum_{j=1}^{N_{\text{R}}}w_{i,j}^{*}\left(k\right)\circledast h_{j,n}\left(k\right)\right)}_{\mbox{combined response }c_{i,n}^{*}\left(k\right)}\circledast s_{n}(k) \nonumber\\
& +\underbrace{\sum_{j=1}^{N_{\text{R}}}w_{i,j}^{*}\left(k\right)\circledast v_{j}(k)}_{\mbox{Gaussian noise \ensuremath{\eta_{i}\left(k\right)}}}\\
 & =\sum_{n=1}^{N_{\text{T}}}c_{i,n}^{*}\left(k\right)\circledast s_{n}(k)+\eta_{i}\left(k\right).
\end{align}
In MIMO blind equalization, we would like to find equalizer parameter
vectors $\mathbf{w}_{i}$ such that the combined (channel-equalizer)
response is free of inter-symbol and inter-channel interferences 
\begin{equation}
c_{i,n}\left(k\right)=\begin{cases}
e^{j\phi_{i}} & \text{for }k=k_{i},\: n=q_{i}\\
0 & \text{otherwise}.
\end{cases}
\end{equation}

\section{Constant Modulus Algorithm for MIMO Equalization \label{sec:CO-CMA}}

In this section, we provide a real-valued representation of the conventional batch CMA cost. This representation is one of the keys to reduce parameter space compared to the work of \cite{Maricic2003} in which the formulation is based on complex values. This representation is also applied to other blind channel equalization and blind source separation costs and will be shown in the latter sections. 
\subsection{CMA for single source recovery}

For the sake of clarity, we discuss the CMA cost as an example of
selecting blind cost. To recover a particular source, the CMA cost
function for the $i$-th equalizer output sequence is defined as 
\begin{equation}
J_{\text{b},i}=J_{\text{cma},i}=E\left[\left(|y_{i}(k)|^{2}-R_{2}\right)^{2}\right]
\label{eq: Constant Modulus Algorithm}
\end{equation}
 where $R_{2}=\frac{E\left[\left|s_{i}\left(k\right)\right|^{4}\right]}{E\left[\left|s_{i}\left(k\right)\right|^{2}\right]}$.
The CMA cost can be represented as a function of the equalizer coefficients
and the channel output statistics \cite{Han2010}. 
Since the formulation here deals with single source, the source index $i$ is omitted.

Let $\text{Re}\left\{\boldsymbol{x}\right\}$ and $\text{Im}\left\{\boldsymbol{x}\right\}$
denote the real and imaginary parts of $\boldsymbol{x}$. Now define
$\mathbf{u}=\left[\begin{array}{c}
\text{Re}\left\{\mathbf{w}\right\}\nonumber\\
\text{Im}\left\{\mathbf{w}\right\}
\end{array}\right]$, $\mathbf{x}_{\text{r}}(k)=\left[\begin{array}{c}
\text{Re}\left\{ \mathbf{x}(k)\right\} \\
\text{Im}\left\{ \mathbf{x}(k)\right\} 
\end{array}\right]$ and $\mathbf{x}_{\text{i}}(k)=\left[\begin{array}{c}
\text{Im}\left\{ \mathbf{x}(k)\right\} \\
-\text{Re}\left\{ \mathbf{x}(k)\right\} 
\end{array}\right]$.
We have the following relationship 
\begin{eqnarray}
\text{Re}\left\{ y\left(k\right)\right\}  & = & \mathbf{u}^{T}\mathbf{x}_{\text{r}}(k),\\
\text{Im}\left\{ y\left(k\right)\right\}  & = & \mathbf{u}^{T}\mathbf{x}_{\text{i}}(k).
\end{eqnarray}
As a result, we have 
\begin{align}
\begin{split}
\left|y\left(k\right)\right|^{2} & =  \text{Re}^{2}\left\{ y\left(k\right)\right\} +\text{Im}^{2}\left\{ y\left(k\right)\right\} \\
 & = \mathbf{u}^{T}\mathbf{x}_{\text{r}}(k)\mathbf{x}_{\text{r}}(k)^{T}\mathbf{u}+\mathbf{u}^{T}\mathbf{x}_{\text{i}}(k)\mathbf{x}_{\text{i}}(k)^{T}\mathbf{u}\\
 & = \mathbf{u}^{T}\underbrace{\left[\mathbf{x}_{\text{r}}(k)\mathbf{x}_{\text{r}}^{T}(k)+\mathbf{x}_{\text{i}}(k)\mathbf{x}_{\text{i}}^{T}(k)\right]}_{\bar{\mathbf{X}}_{k}}\mathbf{u}.
\end{split}
\end{align}
By denoting the rank-1 matrix $\bar{\mathbf{X}}_{k}=\mathbf{x}_{\text{r}}(k)\mathbf{x}_{\text{r}}^{T}(k)+\mathbf{x}_{\text{i}}(k)\mathbf{x}_{\text{i}}^{T}(k)$
and $\mathbf{U}=\mathbf{u}\mathbf{u}^{T}$, we have 
\begin{equation}
\left|y\left(k\right)\right|^{2}=\text{Tr}\left(\bar{\mathbf{X}}_{k}\mathbf{U}\right).
\end{equation}
Note that both $\bar{\mathbf{X}}_{k}$ and $\mathbf{U}$ are symmetric
of dimension $2N\times2N$ with $N=L_{w}+1$. In other words, $\bar{\mathbf{X}}_{k}$
and $\mathbf{U}$ can be mapped by the $\text{svec\ensuremath{\left(\cdot\right)}}$
operator to lower dimensional subspace $\mathbb{R}^{N\left(2N+1\right)}$
after ignoring redundant entries. Furthermore, we sort the entries
in the usual lexicographic order. We define the following operators
and vectors: 
\begin{itemize}
\item For a symmetric $2N\times2N$ matrix $\mathbf{X}$, we define $\text{svec\ensuremath{\left(\cdot\right)}}$
operator and its reverse operator $\mbox{svec}^{-1}\left(\cdot\right)$
as 
\begin{align}
\begin{split}
\text{svec}\left(\mathbf{X}\right) & =\left[ X_{1,1}\:2X_{1,2}\ldots\:2X_{1,2n} \:X_{2,2}\right. \\ 
& \quad \left. 2X_{2,3}\ldots2X_{2,2N}\ldots X_{2N,2N}\right]^{T} \in\mathbb{R}^{N(2N+1)} , \\
\mathbf{X} & =\mbox{svec}^{-1}\left(\text{svec}\left(\mathbf{X}\right)\right).
\end{split}
\end{align}

\item For a vector $\mathbf{u}\in\mathbb{R}^{2N}$, we define $\mathbf{v}=\mbox{qvec}(\mathbf{u})$
as vector whose entries are all second order (quadratic) terms $\left\{ u_{i}u_{j}\right\} $
and sorted in the usual lexicographic order. 
\begin{align}
\begin{split}
\mathbf{v}&=\mbox{qvec}(\mathbf{u}) \\
&=[u_{1}u_{1}\: u_{1}u_{2}\ldots u_{1}u_{2n}\: u_{2}u_{2}\: u_{2}u_{3}\:\ldots \\
&\quad\;\; u_{2}u_{2N}\ldots u_{2N}u_{2N}]^{T}.
\end{split}
\end{align}
There is one to one mapping between elements of $\mathbf{v}$ and
$\mathbf{U}=\mathbf{u}\mathbf{u}^{T}$ as $\mathbf{v}$ consists of
all upper triangle elements of $\mathbf{U}$. Nevertheless, the reverse
operator needs further consideration. Given an arbitrary $\mathbf{v}\in\mathbb{R}^{2N^{2}+N+1}$,
we can form the corresponding $\mathbf{U}$, which is not necessarily
a rank 1 matrix. Therefore, we define the reverse operator with approximation
$\mbox{qvec}^{-1}(\mathbf{v})$ as follows: First, using $\mathbf{v}$
to form the corresponding $\mathbf{U}$; Next, find the rank 1 approximation
of $\mathbf{U}$ by its maximum eigenvalue $\lambda_{\mathbf{U}}$
and the corresponding eigenvector $\mathbf{u}_{\text{eig}}$. The
resulting matrix is ${\lambda_{\mathbf{U}}}\mathbf{u}_{\text{eig}}\mathbf{u}^T_{\text{eig}}$. 
\end{itemize}
With the above definitions, the output power can be rewritten as 
\begin{align}
\begin{split}
E\left[\left|y\left(k\right)\right|^{2}\right] & = E\left[\mathbf{v}^{T}\text{svec}\left(\bar{\mathbf{X}}_{k}\right)\right]= \mathbf{v}^{T}\text{svec}\left(E\left[\bar{\mathbf{X}}_{k}\right]\right)\\
 & = \mathbf{v}^{T}\underset{\mathbf{b}}{\underbrace{E\left[\text{svec}\left(\bar{\mathbf{X}}_{k}\right)\right]}} = \mathbf{v}^{T}\mathbf{b}.
\end{split}
\end{align}
Similarly, the fourth order moment of equalizer output is 
\begin{eqnarray}
E\left[\left|y\left(k\right)\right|^{4}\right] & = & \mathbf{v}^{T}\underset{\mathbf{C}}{\underbrace{E\left[\text{svec}\left(\bar{\mathbf{X}}_{k}\right)\text{svec}^{T}\left(\bar{\mathbf{X}}_{k}\right)\right]}}\mathbf{v}\nonumber \\
 & = & \mathbf{v}^{T}\mathbf{C}\mathbf{v}.
\end{eqnarray}
Therefore, the CMA cost can be written into a fourth order function
of $\mathbf{u}$ or a quadratic function of $\mathbf{v}$ as 
\begin{eqnarray}
J_{\text{cma}} & = & \mathbf{v}^{T}\mathbf{C}\mathbf{v}-2R_{2}\mathbf{b}^{T}\mathbf{v}+R_{2}^{2}.\label{eq:CMA cost matrix form-v}
\end{eqnarray}
The equalizer can be found by minimizing the polynomial function $J_{\text{cma}}$.

\subsection{Multi-source recovery}

To recover multiple input source signals of a MIMO channel, we must
form several single stream receivers $\{\mathbf{w}_{i}\}$ to generate
multiple receiver output streams $\{y_{i}(k)\}$ that will represent
distinct source signals $\{s_{q_{i}}(k)\}$. Typically, a blind algorithm
is adept at recovering one of the possible source signals. However,
because of the inherent properties of blind equalization, the receiver
is unable to ascertain which signal may be recovered {\em a priori}.
As a result, multiple single stream receivers may recover duplicate
signals and miss some critical source signals. To avoid duplicate
convergence, proper initialization of each $\mathbf{w}_{i}$ may help
lead to a diversified convergence as needed. However, there is unfortunately
no guarantee that different initial values of $\mathbf{w}_i$ will lead to $q_{i}\not=q_{j}$.
In order to ensure that different receiver vectors $\mathbf{w}_{i}$
extract distinct signals, common criteria rely on the prior information
that the source signals are i.i.d. and mutually independent.

Let $J_{\text{b},i}$ denote a blind cost function, for examples,
CMA, SWA, or MED, to recover the source $i$, the cost function for multiple
blind source recovery (BSR) is \cite{Papadias1997,Li1998} 
\begin{equation}
J_{\text{bsr}}=\sum_{i=1}^{N_{\text{T}}}J_{\text{b},i}+\lambda_{\text{cr}}\cdot\sum_{i\neq j}^{N_{\text{T}}}J_{\text{cr},i,j}\label{eq:Blind Cost with source separation-simul}
\end{equation}
where $\lambda_{\text{cr}}$ is a positive scalar and $J_{\text{cr},i,j}$
accounts for the correlation of the receiver outputs $i$ and $j$.
The minimization of $J_{\text{cr},i,j}$ ensures that the equalizers
converge to different solutions corresponding to different sources.
One can recover all sources either simultaneously or sequentially
to reduce the parameter size.

In sequential source recovery, we start by extracting the first source
by minimizing one single CMA cost. Assume the sources up to $j-1$
are equalized and separated, we can minimize the following cost for
separating the source $j$. 
\begin{equation}
J_{\text{bsr},j}=J_{\text{b},j}+\lambda_{\text{cr}}\sum_{i=1}^{j-1}J_{\text{cr},i,j}.\label{eq:Blind Cost with source separation-sequential}
\end{equation}
In this work, we consider the sequential approach. $J_{\text{cr},i,j}$
can be chosen as the sum of cross-cumulants \cite{Li1998} or as the
sum of cross-correlations \cite{Papadias1997}. Since the use of cross-cumulant
is for coarse step separation \cite{Li1998} and may lead to poor
convergence, here, we use the cross-correlation 
\begin{equation}
J_{\text{cr},i,j}=\sum_{l=-\delta}^{\delta}\left|E\left[y_{i}\left(k\right)y_{j}^{*}\left(k-l\right)\right]\right|^{2}
\label{eq:Source separation with cumulant}
\end{equation}
where $\delta$ is an integer parameter to be chosen during system
design. $J_{\text{cr},i,j}$ can also be written as a fourth order
function of $\mathbf{u}_{j}$ or a second order function of $\mathbf{v}_{j}=\mbox{qvec}(\mathbf{u}_{j})$.
Noticing that $\mathbf{u}_{i}$, for $i<j$, are already known, we
can write 
\begin{equation}
\sum_{i=1}^{j-1}J_{\text{cr},i,j}=\mathbf{q}_{j}^{T}\mathbf{v}_{j}
\end{equation}
where $\mathbf{q}_{j}$ is a vector formed by the previously calculated
equalizers and received statistics. The detailed calculation of $\mathbf{q}_{j}$
is given in the Appendix A.

Eventually, the cost in \eqref{eq:Blind Cost with source separation-sequential}
when using CMA and cross-correlation can be written as a fourth order
(cost) function of equalizer parameter with zero odd-order coefficients
similar to $J_{\text{cma}}$: 
\begin{align}
J_{\text{bsr},j} & =f(\mathbf{u}_{j})\nonumber \\
 & =\mathbf{v}_{j}^{T}\mathbf{C}\mathbf{v}_{j}-2R_{2}\mathbf{b}^{T}\mathbf{v}_{j}+R_{2}^{2}+\lambda_{\text{cr}}\mathbf{q}_{j}^{T}\mathbf{v}_{j}\nonumber \\
 & =\mathbf{v}_{j}^{T}\mathbf{C}\mathbf{v}_{j}-\left(2R_{2}\mathbf{b}^{T}-\lambda_{\text{cr}}\mathbf{q}_{j}^{T}\right)\mathbf{v}_{j}+R_{2}^{2}.
\end{align}
In the following section, we discuss the method to find global minima of the functions of this type.

\section{Semidefinite programming approach to minimize a fourth order polynomial
\label{sec:CO-SDP of 4th order function}}

\subsection{General formulation}

General formula to find the global minimum of a non-negative polynomial
is discussed in \cite{Shor1997,Parrilo2000}. Here, we restrict the
formulation for the fourth order polynomial $f\left(\mathbf{u}\right)$
with $\mathbf{u}\in\mathbb{R}^{2N}$. The minimization of $f\left(\mathbf{u}\right)$
is equivalent to 
\begin{align}
\text{max} & \qquad\tau\label{eq:max tau for all u}\\
\text{s.t.} & \qquad f\left(\mathbf{u}\right)-\tau\ge0\text{ for all }\mathbf{u}.\nonumber 
\end{align}

This optimization is equivalent to lifting the horizontal hyperplane
created by $\tau$ until it lies immediately beneath the hypersurface
$f\left(\mathbf{u}\right)$. The intersection points that the hyperplane
and the hypersurface are the global minima of $f\left(\mathbf{u}\right)$.
This problem is convex with a convex cost and linear constraints in
$\tau$. The problem, however, still a hard problem to solve since
the number of constraints is infinite. Although it is hard to find
the optimal solution to this problem, we can modify and simplify the
problem by narrowing down the search space. Since $f\left(\mathbf{u}\right)-\tau$
is non-negative, we are looking for a representation of $f\left(\mathbf{u}\right)-\tau$ as a sum of square. 
Following the work of \cite{Maricic2003}, we define two convex cones of fourth order polynomials of $\mathbf{u}$,
$\mathcal{C}$ and $\mathcal{D}$: 
\begin{align*}
\mathcal{C} & =\left\{ \left.g\right|g(\mathbf{u})\text{ is real-valued fourth order polynomial of } \mathbf{w} \right. \\ 
			& \qquad \qquad \left. \text{ and }g(\mathbf{u})\ge0,\forall\mathbf{u}\right\} ,\\
\mathcal{D} & =\left\{ \left. g\right|g(\mathbf{u})=\sum_{i}g_{i}^{2}(\mathbf{u})\text{ with each } \right. \\ 
      & \qquad \left. g_{i}(\mathbf{u})\text{ is real-valued second-order polynomial of }\mathbf{u}\right\} .
\end{align*}

We can rewrite the optimization problem in \eqref{eq:max tau for all u}
in term of $\mathcal{C}$ as 
\begin{align}
\text{max } & \qquad\tau\label{eq:max tau for f-tau in C}\\
\text{s.t.} & \qquad f\left(\mathbf{u}\right)-\tau\in\mathcal{C}.\nonumber 
\end{align}
Since the problem in \eqref{eq:max tau for f-tau in C} is hard to
solve, we can follow \cite{Maricic2003} and narrow down our feasible
solution from $\mathcal{C}$ to a more restrictive set $\mathcal{D}$
\begin{align}
\text{max } & \qquad\tau\label{eq:max tau for f-tau in D}\\
\text{s.t. } & \qquad f\left(\mathbf{u}\right)-\tau\in\mathcal{D}.\nonumber 
\end{align}
This problem can be cast into a convex semi-definite programming to
be solved efficiently \cite{Nesterov2000}.

We recall the following lemma, which is a simplified version of the
results in \cite{Nesterov2000}. 
\begin{lem}
\label{lem:forth-order polynomial with PSD}Given any forth-order
polynomial $g\left(\mathbf{u}\right)$, the following relation holds:
\begin{align}
 & g\left(\mathbf{u}\right)\in\mathcal{D}\Longleftrightarrow g\left(\mathbf{u}\right)=\bar{\mathbf{u}}^{T}\mathbf{G}\bar{\mathbf{u}}\nonumber \\
 & \text{for some symmetric matrix }\mathbf{G}\succcurlyeq\mathbf{0}
\end{align}
where $\bar{\mathbf{u}}=\left[\mathbf{u}_{(2)}\:\mathbf{u}_{(1)}\: u_{(0)}\right]^{T}$
and $u_{(0)}=1$, $\mathbf{u}_{(1)}=\mathbf{u}$, $\mathbf{u}_{(2)}=\mbox{\text{qvec}}\left(\mathbf{u}\right)$. 
\end{lem}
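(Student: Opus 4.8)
This is the standard Gram-matrix characterization of a sum of squares, specialized to quadratic building blocks, so the plan is to establish the two implications directly by linear algebra on coefficient vectors, with no appeal to deeper real-algebraic machinery (none is needed, since $\mathcal{D}$ — unlike $\mathcal{C}$ — is defined directly as the SOS cone).

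For the direction $(\Leftarrow)$, I would start from $g(\mathbf{u})=\bar{\mathbf{u}}^{T}\mathbf{G}\bar{\mathbf{u}}$ with $\mathbf{G}=\mathbf{G}^{T}\succcurlyeq\mathbf{0}$ and invoke a symmetric factorization of $\mathbf{G}$, either the spectral decomposition $\mathbf{G}=\sum_{k}\lambda_{k}\mathbf{q}_{k}\mathbf{q}_{k}^{T}$ with $\lambda_{k}\ge 0$, or equivalently a Cholesky-type factor $\mathbf{G}=\mathbf{L}^{T}\mathbf{L}$. Then $g(\mathbf{u})=\sum_{k}\lambda_{k}\,(\mathbf{q}_{k}^{T}\bar{\mathbf{u}})^{2}=\sum_{k}\big(\sqrt{\lambda_{k}}\,\mathbf{q}_{k}^{T}\bar{\mathbf{u}}\big)^{2}$. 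The key observation is that for any fixed coefficient vector $\mathbf{a}$ the scalar $\mathbf{a}^{T}\bar{\mathbf{u}}$ is a linear combination of the entries of $\bar{\mathbf{u}}$, i.e. of $u_{(0)}=1$, of the linear coordinates $\mathbf{u}_{(1)}=\mathbf{u}$, and of the quadratic monomials in $\mathbf{u}_{(2)}=\mathrm{qvec}(\mathbf{u})$; hence $\mathbf{a}^{T}\bar{\mathbf{u}}$ is a polynomial in $\mathbf{u}$ of total degree at most two. Setting $g_{k}(\mathbf{u})=\sqrt{\lambda_{k}}\,\mathbf{q}_{k}^{T}\bar{\mathbf{u}}$ then exhibits $g=\sum_{k}g_{k}^{2}$ with each $g_{k}$ a real second-order polynomial, so $g\in\mathcal{D}$.

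For $(\Rightarrow)$, I would reverse this bookkeeping: given $g(\mathbf{u})=\sum_{i}g_{i}^{2}(\mathbf{u})$ with each $g_{i}$ of degree at most two, write $g_{i}(\mathbf{u})=\mathbf{a}_{i}^{T}\bar{\mathbf{u}}$ — possible because the constant, the $u_{j}$, and the products $u_{j}u_{k}$ that make up $\bar{\mathbf{u}}$ form a spanning set for the space of polynomials of degree at most two. Then $g_{i}^{2}(\mathbf{u})=\bar{\mathbf{u}}^{T}(\mathbf{a}_{i}\mathbf{a}_{i}^{T})\bar{\mathbf{u}}$, and summing gives $g(\mathbf{u})=\bar{\mathbf{u}}^{T}\mathbf{G}\bar{\mathbf{u}}$ with $\mathbf{G}=\sum_{i}\mathbf{a}_{i}\mathbf{a}_{i}^{T}$, which is symmetric and positive semidefinite as a sum of rank-one PSD matrices, supplying the required $\mathbf{G}$.

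The only point that needs care — and the place I expect a careful reader to pause — is that the lifting $\mathbf{u}\mapsto\bar{\mathbf{u}}$ is not "independent": the monomial $u_{j}u_{k}$ appears both inside $\mathbf{u}_{(2)}$ and as a product of two entries of $\mathbf{u}_{(1)}$, so the representation $g=\bar{\mathbf{u}}^{T}\mathbf{G}\bar{\mathbf{u}}$ is non-unique (the valid $\mathbf{G}$ for a fixed $g$ form an affine subspace). This does not obstruct the argument, since in $(\Rightarrow)$ we only need one PSD Gram matrix and in $(\Leftarrow)$ any PSD Gram matrix works; the proof should simply state explicitly that "second-order polynomial of $\mathbf{u}$" means total degree at most two and record the elementary spanning fact above, because both implications hinge on passing between a degree-$\le 2$ polynomial and a linear form in $\bar{\mathbf{u}}$.
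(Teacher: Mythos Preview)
Your proof is correct and is precisely the standard Gram-matrix argument for the SOS characterization. Note, however, that the paper does not actually prove this lemma: it is stated as a recalled (and simplified) result from \cite{Nesterov2000} and invoked without proof, so there is no in-paper argument to compare against. Your write-up supplies exactly the elementary justification one would expect---spectral (or Cholesky) factorization for $(\Leftarrow)$ and the rank-one outer-product sum $\mathbf{G}=\sum_i \mathbf{a}_i\mathbf{a}_i^{T}$ for $(\Rightarrow)$---and your remark on the non-uniqueness of $\mathbf{G}$ (arising from the monomial redundancies in $\bar{\mathbf{u}}$) is both correct and worth keeping, since the paper later exploits precisely this freedom when it rearranges the blocks of $\mathbf{G}$ in the proof of Proposition~\ref{lem:forth-order polynomial with PSD simplify}.
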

Therefore, we obtain an equivalent formulation of \eqref{eq:max tau for f-tau in D}
as 
\begin{align}
\text{max } & \qquad\tau\nonumber \\
\text{s.t. } & \qquad f\left(\mathbf{u}\right)-\tau=\bar{\mathbf{u}}^{T}\mathbf{G}\bar{\mathbf{u}}\text{ for all }\bar{\mathbf{u}}\nonumber \\
\text{and } & \qquad\mathbf{G}\succcurlyeq0.\label{eq:SOS optimization}
\end{align}
The problem can be solved efficiently using semi-definite programming
\cite{Vandenberghe1996}. Solving \eqref{eq:SOS optimization} for
$\tau$, $\mathbf{G}$, we obtain the global solution for this problem.

In general, the solutions for \eqref{eq:max tau for f-tau in C} and
\eqref{eq:max tau for f-tau in D} are not the same. Define $\tau_{\mathcal{C}}$
and $\tau_{\mathcal{D}}$ the solutions of \eqref{eq:max tau for f-tau in C},
\eqref{eq:max tau for f-tau in D}, respectively. Since, \eqref{eq:max tau for f-tau in D}
is a restrictive version of \eqref{eq:max tau for f-tau in C}, we
have $\tau_{\mathcal{C}}\le\tau_{\mathcal{D}}$. Nevertheless, simulation
tests in \cite{Parrilo2003} showed that the solutions of \eqref{eq:max tau for f-tau in C}
and \eqref{eq:max tau for f-tau in D} are nearly identical for arbitrary
polynomial cost functions $f(\mathbf{u})$. In the next subsection,
we will show that they are actually identical in the case of CMA cost.

\subsection{Minimization of fourth order cost functions without odd-order coefficients}

Now, we would like to specialize the algorithm for solving CMA cost.
Therefore, we focus on the functions that are fourth order without
odd-order entries.

Define 
\begin{align*}
\mathcal{E}&=\left\{ \left.g\right|g(\mathbf{u})\text{ is real-valued polynomial of }\mathbf{u} \right. \\ 
& \qquad \left.  \text{ with only even-order coefficients}\right\} .
\end{align*}
The CMA cost and other mentioned costs in this work belong to the
function set $\mathcal{C}\cap\mathcal{E}$. The minimization problem
is simply 
\begin{align}
\text{max } & \qquad\tau\label{eq:max tau for f-tau in C and E}\\
\text{s.t.} & \qquad f\left(\mathbf{u}\right)-\tau\in\mathcal{C}\cap\mathcal{E}.\nonumber 
\end{align}
As the problem in \eqref{eq:max tau for f-tau in C and E} remains
hard to solve, we reduce our feasible solution from $\mathcal{\mathcal{C}\cap\mathcal{E}}$
to $\mathcal{D\cap\mathcal{E}}$ as 
\begin{align}
\text{max } & \qquad\tau\label{eq:max tau for f-tau in D and E}\\
\text{s.t. } & \qquad f\left(\mathbf{u}\right)-\tau\in\mathcal{D\cap\mathcal{E}}.\nonumber 
\end{align}

We can arrive at the following proposition: 
\begin{prop}
The solutions of the problems in \eqref{eq:max tau for f-tau in C and E}
and \eqref{eq:max tau for f-tau in D and E} are identical. 
\end{prop}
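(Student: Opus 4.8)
The plan is to pin both optimal values to $\min_{\mathbf{u}}f(\mathbf{u})$. Since a sum of squares of polynomials is nonnegative, $\mathcal{D}\cap\mathcal{E}\subseteq\mathcal{C}\cap\mathcal{E}$, so the optimal value $\tau_{\mathcal{D}}$ of \eqref{eq:max tau for f-tau in D and E} is at most the optimal value $\tau_{\mathcal{C}}$ of \eqref{eq:max tau for f-tau in C and E}. Moreover $f\in\mathcal{E}$ and every constant lies in $\mathcal{E}$, so the membership $f(\mathbf{u})-\tau\in\mathcal{E}$ is automatic; the constraint of \eqref{eq:max tau for f-tau in C and E} therefore reduces to $f(\mathbf{u})-\tau\ge 0$ for all $\mathbf{u}$, and since the CMA-type cost is coercive its minimum is attained, so $\tau_{\mathcal{C}}=\min_{\mathbf{u}}f(\mathbf{u})$. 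It is thus enough to produce a sum-of-squares-of-quadratics certificate for $f(\mathbf{u})-\min_{\mathbf{u}}f(\mathbf{u})$: this gives $\tau_{\mathcal{D}}\ge\min_{\mathbf{u}}f(\mathbf{u})=\tau_{\mathcal{C}}$, hence equality.

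First I would build the square. With $\mathbf{v}=\mathrm{qvec}(\mathbf{u})$ and the notation of Section~\ref{sec:CO-CMA}, $f(\mathbf{u})=\mathbf{v}^{T}\mathbf{C}\mathbf{v}-2R_{2}\mathbf{b}^{T}\mathbf{v}+R_{2}^{2}$, where $\mathbf{C}=E[\mathrm{svec}(\bar{\mathbf{X}}_{k})\mathrm{svec}^{T}(\bar{\mathbf{X}}_{k})]\succcurlyeq\mathbf{0}$ is a second-moment matrix and $\mathbf{b}=E[\mathrm{svec}(\bar{\mathbf{X}}_{k})]$ its first moment, so $\mathbf{b}\in\mathrm{range}(\mathbf{C})$. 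Completing the square in $\mathbf{v}$ gives
\begin{equation}
f(\mathbf{u})=\bigl\|\mathbf{C}^{1/2}\mathbf{v}-R_{2}(\mathbf{C}^{1/2})^{\dagger}\mathbf{b}\bigr\|^{2}+c_{0},\qquad c_{0}:=R_{2}^{2}\bigl(1-\mathbf{b}^{T}\mathbf{C}^{\dagger}\mathbf{b}\bigr).
\end{equation}
Every coordinate of $\mathbf{C}^{1/2}\mathbf{v}-R_{2}(\mathbf{C}^{1/2})^{\dagger}\mathbf{b}$ is affine in $\mathbf{v}$, hence a real polynomial of degree at most two in $\mathbf{u}$, so the first term is a sum of squares of such polynomials and lies in $\mathcal{D}$; it carries no odd-order terms, so it lies in $\mathcal{D}\cap\mathcal{E}$ — equivalently, in Lemma~\ref{lem:forth-order polynomial with PSD} one may symmetrize the Gram matrix $\mathbf{G}$ by averaging with $\mathbf{S}\mathbf{G}\mathbf{S}$, where $\mathbf{S}$ flips the sign of the linear block $\mathbf{u}_{(1)}$, which preserves $\mathbf{G}\succcurlyeq\mathbf{0}$ and kills the absent odd part. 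Hence $f-c_{0}\in\mathcal{D}\cap\mathcal{E}$ and $\tau_{\mathcal{D}}\ge c_{0}$.

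The crux is then to show $c_{0}=\min_{\mathbf{u}}f(\mathbf{u})$. By construction $c_{0}=\min_{\mathbf{v}}(\mathbf{v}^{T}\mathbf{C}\mathbf{v}-2R_{2}\mathbf{b}^{T}\mathbf{v}+R_{2}^{2})$ is the \emph{unconstrained} minimum over all $\mathbf{v}$, attained at $\mathbf{v}_{\star}=R_{2}\mathbf{C}^{\dagger}\mathbf{b}$, whereas $\min_{\mathbf{u}}f(\mathbf{u})$ minimizes the same quadratic over only the image $\{\mathbf{v}=\mathrm{qvec}(\mathbf{u})\}$ of vectorized rank-one positive-semidefinite matrices, so $c_{0}\le\min_{\mathbf{u}}f(\mathbf{u})$. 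Combined with $c_{0}\le\tau_{\mathcal{D}}\le\tau_{\mathcal{C}}=\min_{\mathbf{u}}f(\mathbf{u})$, everything pinches together as soon as one verifies that $\mathbf{v}_{\star}$ is itself a vectorized rank-one positive-semidefinite matrix $\mathbf{u}_{\star}\mathbf{u}_{\star}^{T}$ — equivalently, that the semidefinite dual of \eqref{eq:SOS optimization} for the CMA cost has a rank-one optimal moment matrix. This is where the concrete structure must be invoked: each $\bar{\mathbf{X}}_{k}=\mathbf{x}_{\mathrm{r}}(k)\mathbf{x}_{\mathrm{r}}^{T}(k)+\mathbf{x}_{\mathrm{i}}(k)\mathbf{x}_{\mathrm{i}}^{T}(k)$ is positive semidefinite of rank at most two, the sources are i.i.d. and mutually independent with negative kurtosis $\gamma_{s}<0$, and the perfect-equalization combined responses are feasible; from these facts one shows $\mathbf{v}_{\star}=\mathrm{qvec}(\mathbf{u}_{\star})$ for some $\mathbf{u}_{\star}$, and then $\min_{\mathbf{u}}f(\mathbf{u})=f(\mathbf{u}_{\star})=c_{0}$.

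Combining the pieces gives $c_{0}=\tau_{\mathcal{D}}=\tau_{\mathcal{C}}=\min_{\mathbf{u}}f(\mathbf{u})$, which is the assertion; the same argument then applies to the SWA, MED and cross-correlation costs since each has the same quadratic-in-$\mathbf{v}$ shape with a positive-semidefinite $\mathbf{C}$. I expect the third paragraph to be the real obstacle: a generic nonnegative even quartic need not be a sum of squares (there are nonnegative quartics in four or more variables that are not SOS), so the identity $c_{0}=\min_{\mathbf{u}}f(\mathbf{u})$ cannot be a formal manipulation and must be extracted from the moment structure of the cost; making the rank-one verification of $\mathbf{v}_{\star}$ rigorous is the technical heart of the proof.
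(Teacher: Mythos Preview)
Your overall strategy coincides with the paper's: both reduce the claim to showing that the quadratic $Q(\mathbf{v})=\mathbf{v}^{T}\mathbf{C}\mathbf{v}-2R_{2}\mathbf{b}^{T}\mathbf{v}+R_{2}^{2}-\bar\tau$ is nonnegative for \emph{every} $\mathbf{v}\in\mathbb{R}^{N(2N+1)}$ (equivalently, your $c_{0}=\bar\tau$), after which ``nonnegative quadratic $\Rightarrow$ sum of squares of affine forms'' yields $f-\bar\tau\in\mathcal{D}\cap\mathcal{E}$. The paper's proof is a single sentence that simply asserts this nonnegativity for all $\mathbf{v}$ and then invokes the quadratic--SOS fact; it does not address the issue you correctly flag, namely that $f(\mathbf{u})-\bar\tau\ge 0$ for all $\mathbf{u}$ only guarantees $Q\ge 0$ on the image of $\mathrm{qvec}$, not on all of $\mathbb{R}^{N(2N+1)}$.

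Your third paragraph is therefore not an obstacle you are introducing but exactly the step the paper passes over. The difference is that you propose to close it by arguing that the unconstrained minimizer $\mathbf{v}_\star=R_2\mathbf{C}^{\dagger}\mathbf{b}$ corresponds to a rank-one matrix, whereas the paper offers no such argument. Two cautions about your plan. First, rank-one $\mathbf{v}_\star$ is strictly stronger than needed: because the map from Gram matrices $\tilde{\mathbf{G}}$ to polynomial coefficients is many-to-one (precisely the redundancy exploited in the paper's SDP constraints), $f-\bar\tau$ can lie in $\mathcal{D}\cap\mathcal{E}$ via a \emph{different} quadratic representation even when the canonical $Q$ dips below zero, so fixing attention on the particular matrix $\mathbf{C}$ may be too rigid. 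Second, the structural hypotheses you list (rank-two $\bar{\mathbf{X}}_k$, i.i.d.\ sources with $\gamma_s<0$, feasibility of perfect equalization) do not obviously force $\mathbf{C}^{\dagger}\mathbf{b}$ to vectorize a rank-one matrix --- in noisy or non-invertible channels one would not expect it to --- so the deferred verification is a genuine gap rather than routine bookkeeping. In short: your framework matches the paper's, your diagnosis of the hard step is sharper than the paper's own proof, but the repair you sketch is both incomplete and possibly more restrictive than required.
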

\begin{proof} Let $\bar{\tau}$ be the optimal solution for \eqref{eq:max tau for f-tau in C and E}.
Since $f\left(\mathbf{u}\right)-\bar{\tau}$ is a quadratic function
of $\mathbf{v}$ and is non-negative for all $\mathbf{v}$, it can
always be written as a sum of squares of polynomials \cite{Rudin2000}.
In other words, the solution of \eqref{eq:max tau for f-tau in C and E}
and \eqref{eq:max tau for f-tau in D and E} are identical. \end{proof}
Thus, unlike the general formulation, the restrictive problem in \eqref{eq:max tau for f-tau in D and E}
does not introduce any gap. We can now refine Lemma \ref{lem:forth-order polynomial with PSD}
into the following proposition: 
\begin{prop}
\label{lem:forth-order polynomial with PSD simplify}Given any forth-order
polynomial $g\left(\mathbf{u}\right)$ without odd-order coefficients,
the following relationship holds: 
\begin{align}
 & g\left(\mathbf{u}\right)\in\mathcal{D}\cap\mathcal{E}\Longleftrightarrow g\left(\mathbf{u}\right)=\tilde{\mathbf{u}}^{T}\tilde{\mathbf{G}}\tilde{\mathbf{u}}\nonumber \\
 & \text{for some symmetric matrix }\tilde{\mathbf{G}}\succcurlyeq\mathbf{0}\label{eq:u tilde definition}
\end{align}
where $\tilde{\mathbf{u}}=\left[\mathbf{u}_{(2)}\: u_{(0)}\right]^{T}$,
$u_{(0)}=1$, and $\mathbf{u}_{(2)}=\mbox{qvec}\left(\mathbf{u}\right)$. 
\end{prop}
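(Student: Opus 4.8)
The plan is to specialize Lemma~\ref{lem:forth-order polynomial with PSD} by exploiting the absence of odd-order coefficients. Starting from the ``$\Longleftarrow$'' direction: if $g(\mathbf{u})=\tilde{\mathbf{u}}^{T}\tilde{\mathbf{G}}\tilde{\mathbf{u}}$ with $\tilde{\mathbf{G}}\succcurlyeq\mathbf{0}$, then writing $\tilde{\mathbf{G}}=\sum_{i}\mathbf{g}_{i}\mathbf{g}_{i}^{T}$ (a Cholesky-type decomposition using the spectral factorization of a PSD matrix) gives $g(\mathbf{u})=\sum_{i}(\mathbf{g}_{i}^{T}\tilde{\mathbf{u}})^{2}$, and each $\mathbf{g}_{i}^{T}\tilde{\mathbf{u}}$ is a linear combination of the entries of $\mathbf{u}_{(2)}=\mbox{qvec}(\mathbf{u})$ and the constant $u_{(0)}=1$, hence a second-order polynomial in $\mathbf{u}$ containing only degree-$2$ and degree-$0$ terms. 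Therefore $g\in\mathcal{D}$; and since each square $(\mathbf{g}_{i}^{T}\tilde{\mathbf{u}})^{2}$ expands into monomials of even degree only ($0$, $2$, or $4$), we get $g\in\mathcal{E}$ as well, so $g\in\mathcal{D}\cap\mathcal{E}$.

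For the ``$\Longrightarrow$'' direction I would argue as follows. Suppose $g\in\mathcal{D}\cap\mathcal{E}$, so $g(\mathbf{u})=\sum_{i}g_{i}^{2}(\mathbf{u})$ with each $g_{i}$ a second-order polynomial, and $g$ has only even-order coefficients. Apply Lemma~\ref{lem:forth-order polynomial with PSD} to obtain a symmetric $\mathbf{G}\succcurlyeq\mathbf{0}$ with $g(\mathbf{u})=\bar{\mathbf{u}}^{T}\mathbf{G}\bar{\mathbf{u}}$, where $\bar{\mathbf{u}}=[\mathbf{u}_{(2)}\:\mathbf{u}_{(1)}\:u_{(0)}]^{T}$. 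Partition $\mathbf{G}$ conformably into blocks indexed by the $\mathbf{u}_{(2)}$-part, the $\mathbf{u}_{(1)}$-part, and the scalar $u_{(0)}$-part. The cross blocks coupling $\mathbf{u}_{(1)}$ with $\mathbf{u}_{(2)}$ produce degree-$3$ monomials, the block coupling $\mathbf{u}_{(1)}$ with $u_{(0)}$ produces degree-$1$ monomials, and the $\mathbf{u}_{(1)}$--$\mathbf{u}_{(1)}$ block together with the $\mathbf{u}_{(2)}$--$u_{(0)}$ block contributes degree-$2$ monomials. The key point is that the degree-$3$ and degree-$1$ contributions cannot be cancelled by anything else in the expansion (monomials of distinct degree are linearly independent), so since $g\in\mathcal{E}$ these must vanish. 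I would use a symmetrization/averaging argument here: replacing $\mathbf{u}$ by $-\mathbf{u}$ leaves $g$ unchanged (as $g$ is even), and $\mathbf{u}_{(2)}$, $u_{(0)}$ are invariant while $\mathbf{u}_{(1)}$ flips sign; averaging $\mathbf{G}$ with the correspondingly sign-conjugated matrix $\mathbf{D}\mathbf{G}\mathbf{D}$ (where $\mathbf{D}$ negates the $\mathbf{u}_{(1)}$-coordinates) yields a new PSD matrix $\mathbf{G}'$, still representing $g$, in which all blocks coupling the $\mathbf{u}_{(1)}$-part with the rest are zero. Thus $\mathbf{G}'$ is block-diagonal with an $\mathbf{u}_{(1)}$-block and a block acting on $\tilde{\mathbf{u}}=[\mathbf{u}_{(2)}\:u_{(0)}]^{T}$.

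It remains to remove the surviving $\mathbf{u}_{(1)}$--$\mathbf{u}_{(1)}$ block. This block contributes a pure quadratic form $\mathbf{u}^{T}\mathbf{Q}\mathbf{u}$ with $\mathbf{Q}\succcurlyeq\mathbf{0}$. But every entry $u_{p}u_{q}$ of such a quadratic is also an entry of $\mathbf{u}_{(2)}=\mbox{qvec}(\mathbf{u})$, so $\mathbf{u}^{T}\mathbf{Q}\mathbf{u}$ can be rewritten as a linear form $\mathbf{c}^{T}\mathbf{u}_{(2)}$ in the $\mathbf{u}_{(2)}$-variables — equivalently, as $\tilde{\mathbf{u}}^{T}\mathbf{M}\tilde{\mathbf{u}}$ for a suitable symmetric (but possibly indefinite) $\mathbf{M}$ whose only nonzero entries lie in the $\mathbf{u}_{(2)}$-versus-$u_{(0)}$ cross positions. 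I would instead handle this more cleanly by invoking Proposition~2 (just proved): since $g\in\mathcal{C}\cap\mathcal{E}$, the problem of representing $g$ as a sum of squares has no gap, and $g$ is a genuine quadratic form in $\mathbf{v}=\mathbf{u}_{(2)}$ (plus the constant), hence directly a quadratic form $\tilde{\mathbf{u}}^{T}\tilde{\mathbf{G}}\tilde{\mathbf{u}}$ in $\tilde{\mathbf{u}}=[\mathbf{v}\:1]^{T}$; non-negativity of $g$ for all $\mathbf{u}$, combined with the fact that $\mathbf{v}=\mbox{qvec}(\mathbf{u})$ together with the constant sweeps out a set rich enough to force the Gram matrix to be PSD, gives $\tilde{\mathbf{G}}\succcurlyeq\mathbf{0}$. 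The main obstacle I anticipate is precisely this last point: $\mbox{qvec}(\mathbf{u})$ does not range over all of $\mathbb{R}^{2N^{2}+N+1}$ (its image is the cone of rank-one PSD matrices' upper triangles), so one cannot naively conclude PSD-ness of $\tilde{\mathbf{G}}$ from non-negativity of the quadratic form on that restricted image; the argument must go through the sum-of-squares representation guaranteed by Proposition~2 rather than through a direct positivity argument, and care is needed to show the resulting Gram matrix can be taken symmetric and PSD of the stated block form.
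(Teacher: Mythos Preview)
Your plan coincides with the paper's proof. The paper likewise invokes Lemma~\ref{lem:forth-order polynomial with PSD} to obtain a PSD $\mathbf{G}$ with $g(\mathbf{u})=\bar{\mathbf{u}}^{T}\mathbf{G}\bar{\mathbf{u}}$, partitions $\mathbf{G}$ into blocks $\mathbf{G}_{ij}$ according to $\bar{\mathbf{u}}=[\mathbf{u}_{(2)}\;\mathbf{u}_{(1)}\;u_{(0)}]^{T}$, uses $g\in\mathcal{E}$ to drop the odd-degree contributions $\mathbf{u}_{(2)}^{T}\mathbf{G}_{21}\mathbf{u}_{(1)}$ and $\mathbf{u}_{(1)}^{T}\mathbf{G}_{10}$, and then absorbs the surviving quadratic block exactly as you propose: it rewrites $\mathbf{u}_{(1)}^{T}\mathbf{G}_{11}\mathbf{u}_{(1)}$ as a linear form in $\mathbf{u}_{(2)}$ and folds it into the cross block via $\tilde{\mathbf{G}}_{20}=\mathbf{G}_{20}+\tfrac{1}{2}\,\text{svec}(\mathbf{G}_{11})$. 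Your averaging trick $\mathbf{G}\mapsto\tfrac{1}{2}(\mathbf{G}+\mathbf{D}\mathbf{G}\mathbf{D})$ is a tidier way to zero the odd cross-blocks while preserving positive semidefiniteness, but the endpoint is the same construction.

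Regarding the obstacle you single out---whether the resulting $\tilde{\mathbf{G}}$ is actually PSD---the paper does not offer anything beyond a one-line assertion: it concludes $\tilde{\mathbf{G}}\succcurlyeq 0$ ``since $g(\mathbf{u})\ge 0$ by definition.'' Your observation that $[\mbox{qvec}(\mathbf{u})\;\,1]^{T}$ does not range over the full ambient space, so that nonnegativity of $g$ alone need not force \emph{this particular} Gram matrix to be PSD, is a legitimate concern that the paper's proof simply does not address; it takes this step for granted and moves on.
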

\begin{proof} From Lemma \ref{lem:forth-order polynomial with PSD},
we find $\mathbf{G}$ such that $g\left(\mathbf{u}\right)=\bar{\mathbf{u}}^{T}\mathbf{G}\bar{\mathbf{u}}$
and $\mathbf{G}\succcurlyeq0$. We partition $\mathbf{G}$ as 
\begin{equation}
\mathbf{G}=\left[\begin{array}{ccc}
\mathbf{G}_{22} & \mathbf{G}_{21} & \mathbf{G}_{20}\\
\mathbf{G}_{21}^{T} & \mathbf{G}_{11} & \mathbf{G}_{10}\\
\mathbf{G}_{20}^{T} & \mathbf{G}_{10}^{T} & \mathbf{G}_{00}
\end{array}\right]
\end{equation}
where $\mathbf{G}_{ij}$ corresponds to the coefficients for the product
terms between entries of $\mathbf{u}_{(i)}$ and $\mathbf{u}_{(j)}$.
Consequently, the function $g\left(\mathbf{u}\right)$ can be rewritten
as 
\begin{align}
g\left(\mathbf{u}\right) & =\mathbf{u}_{(2)}^{T}\mathbf{G}_{22}\mathbf{u}_{(2)}+\mathbf{u}_{(2)}^{T}\mathbf{G}_{21}\mathbf{u}_{(1)}+\mathbf{u}_{(2)}^{T}\mathbf{G}_{20}\nonumber \\
 & +\mathbf{u}_{(1)}^{T}\mathbf{G}_{21}^{T}\mathbf{u}_{(2)}+\mathbf{u}_{(1)}^{T}\mathbf{G}_{11}\mathbf{u}_{(1)}+\mathbf{u}_{(1)}^{T}\mathbf{G}_{10}\nonumber \\
 & +\mathbf{G}_{20}^{T}\mathbf{u}_{(2)}+\mathbf{G}_{10}^{T}\mathbf{u}_{(1)}+\mathbf{G}_{00}\nonumber \\
 & =\mathbf{u}_{(2)}^{T}\mathbf{G}_{22}\mathbf{u}_{(2)}+2\mathbf{u}_{(2)}^{T}\mathbf{G}_{20}+\mathbf{u}_{(1)}^{T}\mathbf{G}_{11}\mathbf{u}_{(1)}+\mathbf{G}_{00}\nonumber \\
 & +2\mathbf{u}_{(2)}^{T}\mathbf{G}_{21}\mathbf{u}_{(1)}^{T}+2\mathbf{u}_{(1)}^{T}\mathbf{G}_{10}.\label{eq:uGu expansion}
\end{align}
Since $g\left(\mathbf{u}\right)$ does not have odd-order entries,
we have 
\begin{align}
\mathbf{u}_{(2)}^{T}\mathbf{G}_{21}\mathbf{u}_{(1)}^{T} & =0,\\
\mathbf{u}_{(1)}^{T}\mathbf{G}_{10} & =0.
\end{align}
We can rewrite $g\left(\mathbf{u}\right)=\bar{\mathbf{u}}^{T}\bar{\mathbf{G}}\bar{\mathbf{u}}$,
where 
\begin{align*} \bar{\mathbf{G}}=\left[\begin{array}{ccc}
\mathbf{G}_{22} & 0 & \tilde{\mathbf{G}}_{20}\\
0 & 0 & 0\\
\mathbf{G}_{20}^{T} & 0 & \mathbf{G}_{00}
\end{array}\right]
\end{align*}
 and 
\begin {align*}
\tilde{\mathbf{G}}_{20}=\mathbf{G}_{20}+\text{svec}\left(\mathbf{G}_{11}\right)/2.
\end{align*}
This is equivalent to 
\begin {align*}
g\left(\mathbf{u}\right)=\tilde{\mathbf{u}}^{T}\tilde{\mathbf{G}}\tilde{\mathbf{u}}
\end{align*}
where $\tilde{\mathbf{G}}=\left[\begin{array}{cc}
\mathbf{G}_{22} & \tilde{\mathbf{G}}_{20}\\
\tilde{\mathbf{G}}_{20}^{T} & \mathbf{G}_{00}
\end{array}\right]\succcurlyeq0$ since $g\left(\mathbf{u}\right)\ge0$ by definition. \end{proof}

\subsection{Semidefinite programing solution}

\noindent Following Proposition \ref{lem:forth-order polynomial with PSD simplify},
the optimization problem in \eqref{eq:max tau for f-tau in D and E}
is equivalent to solving 
\begin{align}
\text{max } & \qquad\tau\nonumber \\
\text{s.t. } & \qquad f\left(\mathbf{u}\right)-\tau=\tilde{\mathbf{u}}^{T}\tilde{\mathbf{G}}\tilde{\mathbf{u}}\nonumber \\
\text{and } & \qquad\tilde{\mathbf{G}}\succcurlyeq0\label{eq:SOS optimization-2}
\end{align}
where $f\left(\mathbf{u}\right)\in\mathcal{D}\cap\mathcal{E}$, i.e.,
$f\left(\mathbf{u}\right)=\mathbf{u}_{(2)}^{T}\mathbf{A}_{22}\mathbf{u}_{(2)}+2\mathbf{A}_{20}^{T}\mathbf{u}_{(2)}+\mathbf{A}_{00}$,
and $\tilde{\mathbf{u}}$ is defined as in \eqref{eq:u tilde definition}.
Here, $\mathbf{A}_{ij}$ corresponds to the coefficients for the product
terms between entries of $\mathbf{u}_{(i)}$ and $\mathbf{u}_{(j)}$
and $u_{(0)}=1$. The optimization in (\ref{eq:SOS optimization-2})
can be recast into 
\begin{align}
\begin{split}
\text{max } & \qquad\tau\\
\text{s.t. } & \qquad\mathbf{u}_{(2)}^{T}\mathbf{A}_{22}\mathbf{u}_{(2)}=\mathbf{u}_{(2)}^{T}\mathbf{G}_{22}\mathbf{u}_{(2)}, \\
 & \qquad\mathbf{A}_{20}^{T}\mathbf{u}_{(2)}=\mathbf{G}_{20}\mathbf{u}_{(2)}, \\
 & \qquad\mathbf{A}_{00}-\tau=\mathbf{G}_{00} \\
\text{and } & \qquad\tilde{\mathbf{G}}\succcurlyeq0.
\label{eq:SOS optimization-3}
\end{split}
\end{align}

Similar to \cite{Maricic2003}, we let $1\le i\le j\le l\le m\le2N$
and define $\mathcal{P}_{i,j,l,m}$ as the set of all distinct 4-tuples
that are permutation of $\left(i,j,l,m\right)$. Now define a subset
of $\mathcal{P}_{i,j,l,m}$ as 
\[
\mathcal{Q}_{i,j,l,m}=\left\{ \left(i,j,l,m\right),\left(i,j,l,m\right)\in\mathcal{P}_{i,j,l,m}\mbox{ and }i\le j,\: l\le m\right\} .
\]
$f\left(\mathbf{u}\right)-\tau$ can be written into two ways as follows
\begin{align}
\begin{split}
& f\left(\mathbf{u}\right)-\tau  =\tilde{\mathbf{u}}^{T}\tilde{\mathbf{G}}\tilde{\mathbf{u}} \\
 & =\sum_{1\le i\le j\le l\le m\le2N}\left(\sum_{\left(i',j',l',m'\right)\in\mathcal{Q}_{i',j',l',m'}}\mathbf{G}_{(i',j')(l',m')}\right)\\
&u_{i}u_{j}u_{l}u_{m} +\sum_{1\le i\le j\le2N}\left(2\mathbf{G}_{(i,j)}\right)u_{i}u_{j}+\mathbf{G}_{00}
\label{eq:fu-tau 1}
\end{split}
\end{align}
and 
\begin{align}
\begin{split}
&f\left(\mathbf{u}\right)-\tau  =\mathbf{u}_{(2)}^{T}\mathbf{A}_{22}\mathbf{u}_{(2)}+2\mathbf{A}_{20}^{T}\mathbf{u}_{(2)}+\mathbf{A}_{00}-\tau \\
 & =\sum_{1\le i\le j\le l\le m\le2N}\left(\sum_{\left(i',j',l',m'\right)\in\mathcal{Q}_{i',j',l',m'}}\mathbf{A}_{(i',j')(l',m')}\right)\\
&u_{i}u_{j}u_{l}u_{m} +\sum_{1\le i\le j\le2N}\left(2\mathbf{A}_{(i,j)}\right)u_{i}u_{j}+\mathbf{A}_{00}-\tau\label{eq:fu-tau 2}
\end{split}
\end{align}
where $\mathbf{G}_{(i,j)(l,m)}$ and $\mathbf{A}_{(i,j)(l,m)}$ represents
the entries of $\mathbf{G}_{22}$ and $\mathbf{A}_{22}$, respectively,
for the product term between entries $u_{i}u_{j}$ and $u_{l}u_{m}$
in $\mathbf{u}_{(2)}$. Similarly, $\mathbf{G}_{(i,j)}$ and $\mathbf{A}_{(i,j)}$
denote the entries of $\mathbf{G}_{20}$ and $\mathbf{A}_{20}$, respectively,
for product term $u_{i}u_{j}$ in $\mathbf{u}_{(2)}$.

The optimization in (\ref{eq:SOS optimization-3}) can be rewritten as
\begin{align}
\qquad\text{max } & \tau\nonumber \\
\text{s.t. } & \qquad\left(\sum_{\left(i',j',l',m'\right)\in\mathcal{Q}_{i,j,l,m}}\mathbf{G}_{(i',j')(l',m')}\right)\\&=\left(\sum_{\left(i',j',l',m'\right)\in\mathcal{Q}_{i,j,l,m}}\mathbf{A}_{(i',j')(l',m')}\right),\nonumber \\
 & \qquad\qquad\qquad\forall \: 1\le i\le j\le l\le m\le2N,\nonumber \\
 & \mathbf{G}_{(ij)}=\mathbf{A}_{(ij)},\quad\forall \: 1\le i\le j\le2N,\nonumber \\
 & \mathbf{A}_{00}-\tau=\mathbf{G}_{00},\nonumber \\
 & \tilde{\mathbf{G}}\succcurlyeq0.\label{eq:SOS optimization-4}
\end{align}
This problem is convex with linear and semi-definite constraints.
Therefore, it can be solved efficiently using available optimization
tools such as Sedumi or SDPT-3 \cite{Tutuncu2003}.

The proposed convex optimization is a real-valued and a simplified version
of the complex-valued problem in \cite{Maricic2003}. By modifying
the approach given in \cite{Maricic2003}, our method has two numerical
benefits that lead to lower complexity. First, it exploits the special
characteristic of fourth order statistics blind costs that have zero
odd-order coefficients. As a result, the unknown parameter space is
reduced. Second, the real-valued formulation captures all necessary
information for equalization. Because in our real-valued formulation,
the matrix $\mathbf{G}$ does not increase in size, the real-valued
formulation therefore further reduces the number of parameters to half.


\subsection{Post-processing}

After reaching the global solution $\mathbf{G}_{\text{opt}}$ of \eqref{eq:SOS optimization}
via SDP, the result must be translated and mapped back to the original
blind receiver parameter space. Here, we describe such post-processing
procedure.

If the global optimum is achieved, there exists $\tilde{\mathbf{u}}$
such that 
\begin{equation}
\begin{cases}
\tilde{\mathbf{u}}^{T}\mathbf{G}_{\text{opt}}\tilde{\mathbf{u}} & =0\\
\tilde{\mathbf{u}}^{T} & =[\tilde{\mathbf{u}}_{(2)}^{T}\;1]\\
\mbox{rank}\left(\tilde{\mathbf{U}}\right) & =1
\end{cases}
\end{equation}
where $\tilde{\mathbf{U}}$ is the symmetric matrix formed by the
elements of $\tilde{\mathbf{u}}_{(2)}$ inside $\mbox{qvec}^{-1}(\cdot)$.
Since $\mathbf{G}_{\text{opt}}$ is positive semidefinite, the solution
$\tilde{\mathbf{u}}$ must lie in the null space of $\mathbf{G}_{\text{opt}}$.
If the null space of $\mathbf{G}_{\text{opt}}$ has dimension 1, then
we already have $\tilde{\mathbf{u}}_{\text{opt}}$ uniquely. Otherwise,
we must look for $\tilde{\mathbf{u}}$ inside the eigenspace corresponding
to the smallest eigenvalues of $\mathbf{G}_{\text{opt}}$. Furthermore,
the last element of $\tilde{\mathbf{u}}_{\text{opt}}$ must equal
to 1. We now describe an iterative algorithm for real-valued data that was modified from the one proposed
in \cite{Maricic2003}: 
\begin{itemize}
\item Step 1. Initialization: pick a random $\tilde{\mathbf{u}}^{\left(0\right)}$.
Find the null space of $\mathbf{G}_{\text{opt}}$ by finding $\mathbf{V}$
that consists of eigenvectors of $\mathbf{G}_{\text{opt}}$ whose
eigenvalues are below a set threshold $\gamma$. 
\item Step 2. For a $k>0$, find the linear projection of $\tilde{\mathbf{u}}^{\left(k\right)}$
onto $\mathbf{V}$: 
\begin{equation}
\hat{\mathbf{u}}=\mathbf{V}\mathbf{V}^{T}\tilde{\mathbf{u}}^{(k)}.\label{eq:Gopt null space projection}
\end{equation}

\item Step 3. Normalize the last element to 1. Rescale $\hat{\mathbf{u}}$
as 
\begin{equation}
\check{\mathbf{u}}=\hat{\mathbf{u}}/\hat{u}_{0}
\end{equation}
where $\hat{u}_{0}$ is the last element of $\hat{\mathbf{u}}$. The
resulting vector $\check{\mathbf{u}}$ now is in null space of $\mathbf{G}_{\text{opt}}$
and its last element is 1. Let $\check{\mathbf{u}}_{2}$ be the vector
consisting of the first $2N^{2}+N$ elements of $\check{\mathbf{u}}$
as $\check{\mathbf{u}}=\left[\check{\mathbf{u}}_{2}\:1\right]^{T}$. 
\item Step 4. Calculate $\mathbf{u}=\mbox{qvec}^{-1}\left(\check{\mathbf{u}}_{2}\right)$,
note that this is an approximation operation related to rank 1 approximation.
Form $\tilde{\mathbf{u}}^{\left(k+1\right)}=\left[\left(\text{qvec}\left(\mathbf{u}\right)\right)^{T}\;1\right]^{T}$. 
\item Step 5. Repeat step 2 until converge. The equalizer is contained in
$\mathbf{u}$ as 
\[
\mathbf{u}=\left[\mbox{Re}\left(w^{T}\right)\:\mbox{Im}\left(w^{T}\right)\right]^{T}.
\]

\end{itemize}

Note that in step 3 of this existing post-processing, the last element
of $\check{\mathbf{u}}$ must be nonzero. In practice, this condition
may not always be true. To overcome this weakness, we can require
instead that the equalizer output have the same power as the transmitted
symbols. We introduce an additional gain $g$ on $\mathbf{u}$, (or
$g^{2}$ on $\mathbf{v}=\mbox{qvec}(\mathbf{u})$ ) such that $g\mathbf{u}$
produces the output with the same power as the transmitted symbol,

\begin{equation}
E\left[\left|y\left(k\right)\right|^{2}\right]=g^{2}\mathbf{v}^{T}\mathbf{b}=\sigma_{s}^{2}.
\end{equation}
Hence, 
\begin{equation}
g=\sqrt{\frac{\sigma_{s}^{2}}{\mathbf{v}^{T}\mathbf{b}}}.\label{eq:Rescale the equalizer}
\end{equation}
As a result, we have a more robust and new post-processing algorithm
as follows: 
\begin{itemize}
\item Step 1. Initialization: pick a random $\tilde{\mathbf{u}}^{\left(0\right)}$.
Find the null space of $\mathbf{G}_{\text{opt}}$ by finding matrix $\mathbf{V}$
which consists of eigenvectors of $\mathbf{G}_{\text{opt}}$
corresponding to the eigenvalues less than threshold $\gamma$. 
\item Step 2. For a $k>0$, find $\hat{\mathbf{u}}$ which is the linear
projection of $\tilde{\mathbf{u}}^{\left(k\right)}$ onto $\mathbf{V}$
as \eqref{eq:Gopt null space projection}. Let $\hat{\mathbf{u}}_{2}$
be the vector consisting of the first $2N^{2}+N$ elements of $\hat{\mathbf{u}}$
as $\hat{\mathbf{u}}=\left[\hat{\mathbf{u}}_{2}\:1\right]^{T}$. 
\item Step 3. Calculate $\mathbf{u}=\mbox{qvec}^{-1}\left(\hat{\mathbf{u}}_{2}\right)$. 
\item Step 4. Find $g$ using \eqref{eq:Rescale the equalizer}. The new
equalizer is $\check{\mathbf{u}}=g\mathbf{u}$, and the new $\tilde{\mathbf{u}}^{\left(k+1\right)}=\left[\left(\text{qvec}\left(\check{\mathbf{u}}\right)\right)^{T}\;1\right]^{T}$. 
\item Step 5. Repeat step 2 until converge. 
\end{itemize}

\subsection{Complexity discussion}
The complexity of the system depends mostly on the size of $\mathbf{G}$. It affects the size of kurtosis and covariance matrices, the number of variables input to the SDP solver, the size of vectors and matrices in post-processing step. 
Similar to many other batch algorithms the complexity of forming the statistical matrices is $\mathcal{O}(KN^4)$.
The SDP solver using interior point method requires the worst case complexity of $O(n^{3.5})$, where $n$ is the number of variables. 
As the size of matrix $\mathbf{G}$ is $\mathcal{O}(N^2)\mathcal{O}(N^2)$, the SDP-solver requires $\mathcal{O}((N^4)^{3.5})$ arithmetic operations.
For the post-processing techniques, the dominant operation is the rank one approximation which is realized by singular value decomposition and its complexity is $\mathcal{O}(N^3)$.
We further discuss about the SDP solver complexity as it is dominant.
Compared to the work in \cite{Maricic2003}, we can see that, in big O notation, the worst case complexity do not change. However, in practice, convex optimization in \cite{Maricic2003} estimates $2N$ complex equalizer taps  whereas the proposed CO-CMA method estimates $2N$ many real equalization taps resulting into  reduction of worse case complexity. That is equivalent to reduce the complexity of the SDP solver by $2^{3.5}$ for the best case. This, in practice, is very significant.

In \cite{Maricic2003}, the authors mentioned about sparsity of $\mathbf{G}$ for the formulation of CMA cost where the first and the third order parts are zeros and a specially tailored SDP solver is needed. With our formulation, we show that it is not necessary to have such a solver. By omitting the odd order parts, the new formulation further reduces the number of variables from $(2N^2+3N+1)\times(2N^2+3N+1)$ to $(2N^2+N+1)\times(2N^2+N+1)$. For $N=6$, it is equivalent to reducing $37\%$ complexity.

Another advantage of our formulation over the formulation in \cite{Maricic2003} is that the vector $\mathbf{u}$ comprises the real and imaginary parts of the equalizer whereas the corresponding component in \cite{Maricic2003} is made of the equalizer vector and its Hermitian. In \cite{Maricic2003}, this dependency is not imposed in the post-processing step and may requires more iterations to converge. In fact, our simulations show that at most 3 iterations are requires to converge where as the work in \cite{Maricic2003} requires 5 iterations. 

As compared to the traditional gradient decent algorithms, the convex optimization approach, which uses SDP solver for the fourth order cost, provides much better performance at the expanse of complexity. The gradient decent algorithms such as BGD-CMA or OS-CMA, have low complexity and  suffer from local minima and, in many cases, provide poor performance. In contrast, the convex optimization approach at least finds a local minimum that achieves good performance. Due to large computational complexity, the convex optimization approach is not viable for real time applications. Nevertheless, our work achieves incremental complexity reduction for CO-CMA method.

\section{Generalization to Other Algorithms}

\label{sec:CO-Generalize SWA MED Training} The principle of the convex
optimization relaxation and the accompanied iterative mapping procedure
can be generalized beyond CMA. In fact, several well-known blind algorithms
based on fourth order statistics can also be recast into convex optimization
algorithms.

\subsection{Shalvi-Weinstein algorithm}

Closely related to CMA is SWA \cite{Shalvi1990}. This algorithm is
also based on the fourth order statistics by minimizing the following
cost 
\begin{align}
\begin{split}
J_{\text{SWA}}  =&E\left[\left|y\left(k\right)\right|^{4}\right]-\left(2+\frac{\left(1+\alpha\right)\gamma_{s}}{\sigma_{s}^{4}}\right)\\
&E^{2}\left[\left|y\left(k\right)\right|^{2}\right] +2\alpha\frac{\gamma_{s}}{\sigma_{s}^{2}}E\left[\left|y\left(k\right)\right|^{2}\right].\label{eq:Shalvi-Weinstein Algorithm formulation}
\end{split}
\end{align}
Here, the source index $i$ is omitted. For $\alpha=-R_{2}\sigma_{s}^{2}/\gamma_{s}$,
the SWA cost becomes the CMA cost with a constant difference.

Using real-valued vector notation, we can rewrite the SWA cost as
\begin{align}
J_{\text{SWA}}  =&\mathbf{v}^{T}\mathbf{C}\mathbf{v}-\left(2+\frac{\left(1+\alpha\right)\gamma_{s}}{\sigma_{s}^{4}}\right)\mathbf{v}^{T}\mathbf{b}\mathbf{b}^{T}\mathbf{v}+2\alpha\frac{\gamma_{s}}{\sigma_{s}^{2}}\mathbf{v}^{T}\mathbf{b}\nonumber\\
  =&\mathbf{v}^{T}\left[\mathbf{C}-\left(2+\frac{\left(1+\alpha\right)\gamma_{s}}{\sigma_{s}^{4}}\right)\mathbf{b}\mathbf{b}^{T}\right]\mathbf{v}+\\
&2\left(\alpha\frac{\gamma_{s}}{\sigma_{s}^{2}}\mathbf{b}^{T}\right)\mathbf{v}.\nonumber
\label{eq:Shalvi-Weinstein Algorithm formulation real}
\end{align}
Clearly, the cost function is a fourth order function of the equalizer
parameter vector $\mathbf{u}$, similar to the CMA. Hence, we can
apply the proposed convex optimization method on this cost function
for convergence enhancement.

\subsection{MED cost and the modified cost for convex optimization}

Another related algorithm is the MED algorithm originally developed
by Donoho in \cite{Donoho1981} as 
\begin{eqnarray}
 & \text{Maximize} & \text{sign}\left(\gamma_{s}\right)E\left[\left|y\left(k\right)\right|^{4}\right]\label{eq:Minimum Entropy Deconvolution}\\
 & \text{subject to} & E\left[\left|y\left(k\right)\right|^{2}\right]=\sigma_{s}^{2}.\nonumber 
\end{eqnarray}
Taking into account the negative kurtosis of the communication signals,
i.e., $\text{sign}\left(\gamma_{s}\right)=-1$, we form an equivalent
cost 
\begin{equation}
J_{\text{med}}=E\left[\left|y\left(k\right)\right|^{4}\right]+\lambda_{\text{p}}\left(E\left[\left|y\left(k\right)\right|^{2}\right]-\sigma_{s}^{2}\right)^{2}
\end{equation}
where $\lambda_{\text{p}}$ is the Lagrangian multiplier for the power
constraints. The cost can be further summarized as 
\begin{align}
\begin{split}
J_{\text{med}}  =&E\left[\left|y\left(k\right)\right|^{4}\right]+\lambda_{\text{p}}E^{2}\left[\left|y\left(k\right)\right|^{2}\right]-\\
&\lambda_{\text{p}}2\sigma_{s}^{2}E\left[\left|y\left(k\right)\right|^{2}\right]+\lambda_{\text{p}}\sigma_{s}^{4}.
\end{split}
\end{align}
Applying the real-valued formulation, we arrive at another fourth order
blind cost function 
\begin{align*}
J_{\text{med}} & =\mathbf{v}^{T}\left[\mathbf{C}+\lambda_{\text{p}}\mathbf{b}\mathbf{b}^{T}\right]\mathbf{v}-2\left(\lambda_{\text{p}}\sigma_{s}^{2}\mathbf{b}^{T}\right)\mathbf{v}+\lambda_{\text{p}}\sigma_{s}^{4}.
\end{align*}
Therefore, we can apply the proposed convex optimization method on
MED for convergence enhancement.

\section{Application to Semiblind Equalization and Signal Recovery}

In practice, there are often pilot symbols for channel estimation
and equalization. Semiblind equalization and signal recovery are desirable
when the linear system is too complex for the available pilot symbols
to fully estimate or equalize. In such cases, integrating pilot symbols
with blind recovery criterion into a semiblind equalization makes
better sense. In this section, we present a semiblind algorithm by
utilizing the available pilot samples to enhance the equalization
performance using the aforementioned convex optimization formulation.

Our basic principle is to construct a special semiblind cost for minimization.
Typically, such cost functions are usually formed as a linear combination
of a blind criterion $J_{\text{b}}$ and a pilot based cost $J_{\text{t}}$
in the form of 
\begin{align}
J_{\text{sb}} & =\lambda J_{\text{b}}+(1-\lambda)J_{\text{t}}.\label{eq:Semiblind cost - general form}
\end{align}
Here the scalar $\lambda$ depends on a number of factors such as
the pilot sequence length, the number of sources, as well as the source
signal constellations.

In order to apply the same convex optimization principle, we aim to
develop a semiblind cost function that is also a fourth order function
of the receiver parameters. This cost function $J_{\text{sb}}$ should
also only have even order coefficients. Here, for simplicity, we discuss
the use of pilot in single source recovery mode (SISO or SIMO channel
model), although our method can be easily generalized for multiple
sources.

For a single signal source, we can omit the source index. Without
loss of generality, let the transmitted symbol $s(k)$, $k=1,...,L_{\text{t}}$
be the training sequence. In the absence of noise and under perfect
channel equalization, we have the following linear relationship between
training symbols and received signals 
\begin{equation}
\mathbf{w}^{H}\mathbf{x}(k)=s(k-d),\qquad\; n=1,...,L_{\text{t}}.\label{eq:Training Equation, Complex}
\end{equation}
where $d$ is the decision delay.

We convert the complex data representation to real-valued data representation
by defining 
\begin{equation}
s_{\text{t}}(p)=\begin{cases}
\mbox{Re}\left\{ s(\frac{p}{2}-d)\right\}  & \mbox{ if \ensuremath{p} is even}\\
\mbox{Im}\left\{ s(\frac{p+1}{2}-d)\right\}  & \mbox{ if \ensuremath{p} is odd}
\end{cases}
\end{equation}
and 
\begin{equation}
\mathbf{x}_{\text{t}}(p)=\begin{cases}
\mathbf{x}_{\text{r}}(\frac{p}{2}) & \mbox{ if \ensuremath{p} is even}\\
\mathbf{x}_{\text{i}}(\frac{p+1}{2}) & \mbox{ if \ensuremath{p} is odd.}
\end{cases}
\end{equation}
The relationship between training vector and the received signal vector
can be equivalently written as real-valued signal representation as follows
\begin{equation}
\mathbf{u}^{T}\mathbf{x}_{\text{t}}(p)=s_{\text{t}}(p),\qquad p=1,\;2,\;\cdots,\;2L_{\text{t}}.\label{eq:Training Equation, Real}
\end{equation}
This relationship can be also written as a fourth order function of
$\mathbf{u}$ through the following steps.

First, under noise free condition, if the channel can be completely
equalized, then the following criteria can be used for channel equalization
\begin{equation}
\mathbf{u}^{T}\mathbf{x}_{\text{t}}(p)\mathbf{x}_{\text{t}}^{T}(q)\mathbf{u}=s_{\text{t}}(p)s_{\text{t}}(q),\qquad p,q=1,\;2,\;\cdots,\;2L_{\text{t}}.\label{eq:Training Equation, Real-2}
\end{equation}
Therefore, the equalizer should minimize the following pilot-based
fourth order cost function 
\begin{align}
J_{\text{t}} & =\sum_{p=1}^{2L_{\text{t}}}\sum_{q=p}^{2L_{\text{t}}}\left[\mathbf{u}^{T}\mathbf{x}_{\text{t}}(p)\mathbf{x}_{\text{t}}^{T}(q)\mathbf{u}-s_{\text{t}}(p)s_{\text{t}}(q)\right]^{2}\nonumber \\
 & =\mathbf{v}^{T}\mathbf{C}_{\text{t}}\mathbf{v}-\mathbf{b}_{\text{t}}\mathbf{v}+a_{\text{t}},
\end{align}
where $\mathbf{C}_{\text{t}}$, $\mathbf{b}_{\text{t}}$, and $a_{\text{t}}$
are appropriately defined matrix, vector, and scalar that correspond
to the fourth, the second, and the zero-th order coefficients of $J_{\text{t}}$,
respectively. Their details are presented in Appendix B.

This cost can be combined with one of the blind costs to form semi-blind
costs which can be solved by the convex optimization 
\begin{equation}
J_{sb}=\mathbf{v}^{T}\mathbf{C}_{\text{sb}}\mathbf{v}-\mathbf{b}_{\text{sb}}\mathbf{v}+a_{\text{sb}}
\end{equation}
where 
\begin{align}
\mathbf{C}_{\text{sb}} & =\lambda\mathbf{C}_{\text{b}}+(1-\lambda)\mathbf{C}_{\text{t}},\\
\mathbf{b}_{\text{sb}} & =\lambda\mathbf{b}_{\text{b}}+(1-\lambda)\mathbf{b}_{\text{t}},\\
a_{\text{sb}} & =\lambda a_{\text{b}}+(1-\lambda)a_{\text{t}}
\end{align}
with $\mathbf{C}_{\text{b}}$, $\mathbf{b}_{\text{b}}$, and $a_{\text{b}}$
are the matrix, vector, scalar that correspond to fourth, second,
zero order coefficients of the aforementioned blind costs, respectively.

\section{Simulation results\label{sec:CO-Simulation results}}

We now present numerical results to illustrate the performance of
our new batch algorithms using convex optimization (CO) for CMA, SWA,
MED and the fourth order cost semiblind algorithm. In batch algorithms,
the expectations are replaced by the average of $K$ equalizer input
data samples for each antenna. To show the advantages of our formulation,
we compare our algorithms with other blind/semiblind algorithms using
gradient descent method. For the best performance, batch gradient
descent algorithms are used and marked as BGD.

The semiblind algorithms are marked as SB. The semiblind cost for
comparison is selected with CMA cost for the blind part and LS cost
for pilots in \eqref{eq:Semiblind cost - general form}: 
\begin{equation}
J_{\text{b}}=J_{\text{cma}},\qquad\qquad J_{\text{t}}=\sum_{k=\delta+1}^{\delta+L_{\text{t}}}\left|\mathbf{w}^{H}\mathbf{x}\left(k\right)-s\left(k-\delta\right)\right|^{2}.\label{eq:ISI performance metric}
\end{equation}
Here, the decision delays $\delta$ is chosen optimally.

We compare the algorithm performance in terms of the final receiver
ISI. Quantitatively, the equalization effectiveness is measured by
the normalized ISI for the $i-$th equalizer output defined as 
\begin{equation}
\text{ISI}_{i}=\frac{\sum_{j,k}\left|c_{i,j}\left(k\right)\right|^{2}-{\max_{j,k}}\left|c_{i,j}\left(k\right)\right|^{2}}{{\max_{j,k}}\left|c_{i,j}\left(k\right)\right|^{2}}.
\end{equation}
For multiple source signals, we also define the sum normalized ISI
as 
\begin{equation}
\text{ISI}=\mbox{\ensuremath{\sum}}_{i}\text{ISI}_{i}.
\end{equation}

There are several systems in our studies: SISO system with multipath,
$2\times2$ MIMO system with multipath and a $4\times4$ mixing matrix.
The convex optimization is proceeded using available optimization
tool \cite{Tutuncu2003}. The results are averaged over 500 Monte-Carlo
realizations of inputs and channels, unless otherwise specified.

\subsection{SISO channel}

We first test our convex optimization using CMA, SWA, MED costs on
SISO channel and compare with BGD-CMA. The QPSK and 16-QAM inputs
are passed through 3-tap Rayleigh channels with unit power profile
and the receive signals are equalized by $6$-tap equalizers. 
Tables \ref{tab:ISI and Cost comparison among CMA, SWA, MED - SISO, QPSK}
and \ref{tab:ISI and Cost comparison among CMA, SWA, MED - SISO, 16QAM}
compare the ISI performance of different algorithms for QPSK and 16-QAM channel inputs, respectively. 
For convex optimization, SDP-3 with interior point method is used. 
For post-processing, the threshold $\gamma$ is set at $10^{-7}$. The CO-CMA uses both post-processing techniques in \cite{Maricic2003} (marked as "pp1") and the newly proposed one (marked as "pp2"). 
Other CO algorithms apply only the proposed post-processing technique.  
The data length is $1000$ to ensure good convergence.
\begin{table}[H]
{\small{{{}{}\protect\protect\caption{{\small{{{}{}The ISI and cost of different algorithms for SISO
system with QPSK inputs\label{tab:ISI and Cost comparison among CMA, SWA, MED - SISO, QPSK}}}}}
}}{\small \par}

{\small{}}}{\small \par}

{\small{{{}{}\medskip{}
 }}}{\small \par}

\centering{}%
\begin{tabular}{|c|c|c|}
\hline 
{\small{}ISI (dB)} & \multicolumn{1}{c|}{No noise (SNR=$\infty$)} & \multicolumn{1}{c|}{SNR=$14$dB}\tabularnewline
\hline 
Optimum  & $-10.1813$  & - \tabularnewline
\hline 
{\small{}CO-CMA (pp1)} & $-10.1102$  & {\small{}$-9.5104$} \tabularnewline
\hline 
{\small{}CO-CMA (pp2)} & $-10.1104$ & {\small{}$-9.5918$} \tabularnewline
\hline 
{\small{}CO-SWA} ($\alpha=0.5$)  & $-10.1106$  & {\small{}$-9.5953$}\tabularnewline
\hline 
{\small{}CO-MED} ($\lambda_{\text{p}}=2$)  & $-10.1060$  & {\small{}$-9.5892$}\tabularnewline
\hline 
{\small{}BGD-CMA-(1)} & $-8.0446$  & {\small{}$-7.9024$}\tabularnewline
\hline 
{\small{}BGD-CMA-(3)} & $-8.6376$  & {\small{}$-8.4791$}\tabularnewline
\hline 
\end{tabular}
\end{table}

\begin{table}[H]
{\small{{{}{}\protect\protect\caption{{\small{{{}{}The ISI and cost of different algorithms for SISO
system with 16-QAM inputs\label{tab:ISI and Cost comparison among CMA, SWA, MED - SISO, 16QAM}}}}}
}}{\small \par}

{\small{}}}{\small \par}

{\small{{{}{}\medskip{}
 }}}{\small \par}

\centering{}%
\begin{tabular}{|c|c|c|}
\hline 
{\small{}ISI (dB)} & \multicolumn{1}{c|}{No noise (SNR=$\infty$)} & \multicolumn{1}{c|}{SNR=$14$dB}\tabularnewline
\hline 
Optimum  & $-10.1813$  & - \tabularnewline
\hline 
{\small{}CO-CMA (pp1)} & $-9.7162$  & {\small{}$-9.2021$} \tabularnewline
\hline 
{\small{}CO-CMA (pp2)} & $-9.7169$  & {\small{}$-9.2186$}\tabularnewline
\hline 
{\small{}CO-SWA} ($\alpha=5$)  & $-9.7166$  & {\small{}$-9.2196$}\tabularnewline
\hline 
{\small{}CO-MED} ($\lambda_{\text{p}}=2$)  & $-9.7170$  & {\small{}$-9.2161$}\tabularnewline
\hline 
{\small{}BGD-CMA-(1)}   & $-7.9992$  & {\small{}$-6.4451$}\tabularnewline
\hline 
{\small{}BGD-CMA-(3)}  & $-8.3059$  & {\small{}$-7.2917$}\tabularnewline
\hline 
\end{tabular}
\end{table}

The simulations show no significant performance difference among the algorithms using convex optimization. 
Under good condition, i.e., high SNR and long enough data size, the convex optimization
could find the global minima for each case. In fact, the equalization
objective does not depend on the costs. The difference among the costs
with the choice of parameters are perhaps the convergence of the gradient
descent implementations as confirmed by many works \cite{Chen2004,Han2010,Ding2001,Chen1991}.
Similar results can be observed in MIMO channel. Therefore, in the
rest of the simulation results, we only consider CMA as an example
of fourth order statistic based blind channel equalization cost. We
also compare the CO algorithms with BGD-CMA in Tables \ref{tab:ISI and Cost comparison among CMA, SWA, MED - SISO, QPSK}
and \ref{tab:ISI and Cost comparison among CMA, SWA, MED - SISO, 16QAM},
The BGD-CMA-(1) and BGD-CMA-(3) are initialized with single spike
at the first and the third position, respectively. We can see that the initial
value is essential for BGD algorithms as they converge to local minima.
On the other hand, the CO algorithms do not depend on the initial
value to reach a near global optima.

Comparing two post-processing techniques for CO-CMA for floating point implementation, our newly proposed method is slightly better than the existing method in \cite{Maricic2003}. We observed the cases when the last element of $\hat{\mathbf{u}}_{\text{opt}}$ is nearly zero. In fact, the probability that the last element of $\hat{\mathbf{u}}_{\text{opt}}$ being zeros is very small for floating point implementation resulting in to similar performance. However, for fixed point implementation, the probability of last element being zero increases by decreasing the number of fractional bits, which results in to divergence of equalizer of pp1 due to normalization. 
In order to investigate impact of fractional bits of fixed point implementation on \cite{Maricic2003} and proposed method, we consider 8, 9, 10, 11 and 12 fractional bits for both pp1 and pp2 with 6-tap equalizer. In simulation setup, we use 16-QAM constellation and complex channel from \cite{Dogancay1999} as 
\begin{equation}
\mathbf{h}=\left[\begin{array}{c}
-0.033+0.014j\\
0.085-0.039j\\
-0.232+0.136j\\
0.634-0.445j\\
0.07-0.233j\\
-0.027-0.071j\\
-0.023-0.012j
\end{array}\right].
\label{eq:Dogancay channel}
\end{equation}

\begin{figure}[H]
\begin{centering}
\includegraphics[scale=0.8]{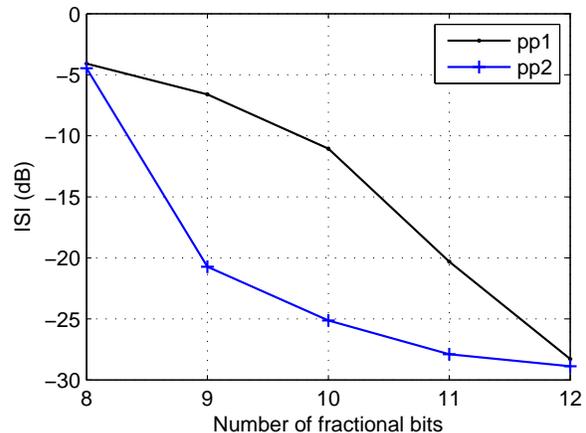} 
\par\end{centering}

\protect\protect\caption{ISI vs number of fractional bits comparison between two post-processing techniques for CO-CMA for SISO channel, 16 QAM signal (SNR=$18$dB, $K=1000$, $L_{\text{w}}=5$)}

\label{fig:ISI vs Qx CO-CMA NM1 16QAM} 
\end{figure}

Figure \ref{fig:ISI vs Qx CO-CMA NM1 16QAM} compares the residual ISI of the two aforementioned methods averaged over $500$ Monte Carlo runs. As figure \ref{fig:ISI vs Qx CO-CMA NM1 16QAM} reveals, pp2 outperforms pp1 when the number of fractional bits are from 9 to 11. 
It is clear that when the number of fractional bits is not sufficient, the small values become zeros and the pp1 suffers from division by zero problem. In contrast, the pp2 does not have the normalization step as in pp1 making pp2 more robust for fixed point implementation.


\begin{figure}[t]
\begin{centering}
\includegraphics[scale=0.60]{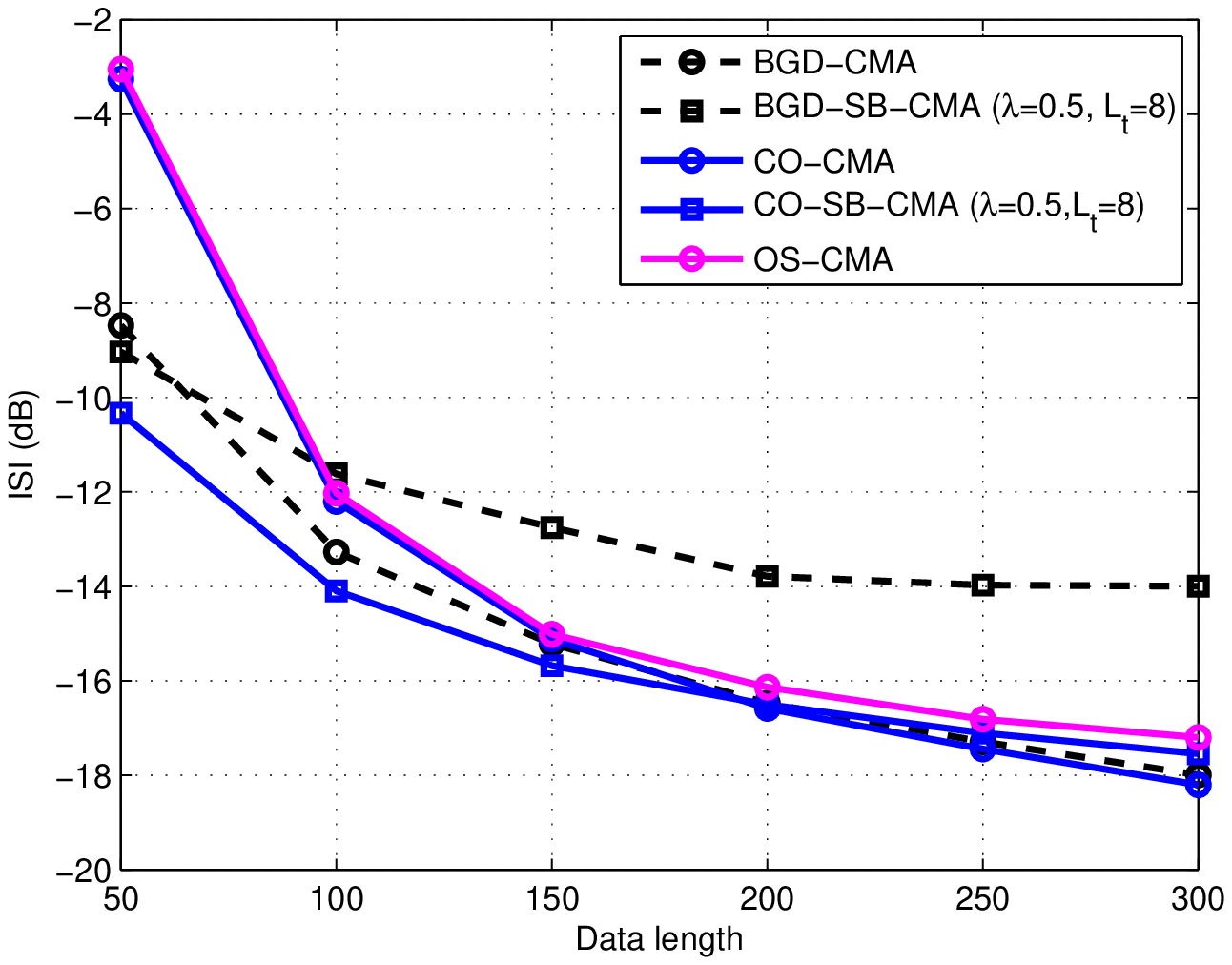} 
\par\end{centering}

\protect\protect\caption{ISI performance of different blind/semiblind algorithms using convex
optimization for the SISO channel with $4$-QAM input (SNR=$8$dB).}

\label{fig:ISI vs K SB-CO-CMA SISO 4QAM} 
\end{figure}

\begin{figure}[t]
\begin{centering}
\includegraphics[scale=0.60]{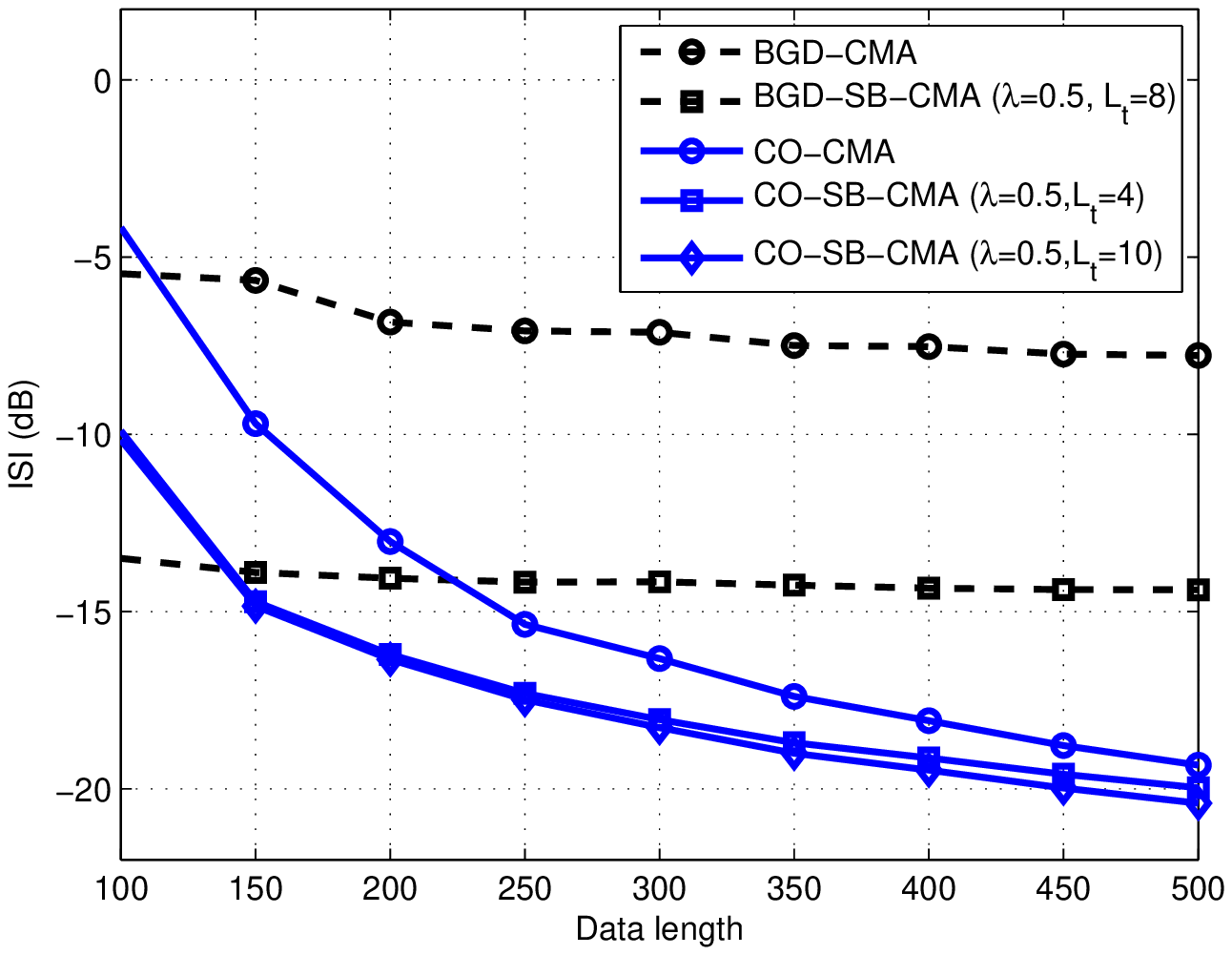} 
\par\end{centering}

\protect\protect\caption{ISI performance of different blind/semiblind algorithms using convex
optimization for the SISO channel with $16$ QAM input (SNR=$14$dB).}

\label{fig:ISI vs K SB-CO-CMA SISO 16QAM} 
\end{figure}

To assert the behavior of the convex optimization algorithms, we test
the algorithms on the fix channel in \eqref{eq:Dogancay channel}. In the rest of simulation results, for convex optimization, only the proposed post-processing technique with floating point is used. The equalizer length is $6$. Figures \ref{fig:ISI vs K SB-CO-CMA SISO 4QAM}
and \ref{fig:ISI vs K SB-CO-CMA SISO 16QAM} show the ISI performance
of the semiblind algorithms for the SISO channel with QPSK and 16-QAM
signals, respectively, when data length $K$ varies. The SNR in QPSK system is 8 dB and in 16-QAM system is 14 dB, For semiblind algorithms, $\lambda=0.5$ and $L_{\text{t}}=8$.
For BGD algorithms, the initialization is $[0\;0\;1\;0\;0\;0]$. For
good ISI performance, the step size is set at $0.01$ and $0.001$
for QPSK and 16-QAM, respectively. With QPSK inputs, the blind algorithms
have almost the same performance for a large enough $K$. It is known
that the channel is fairly easy to equalize and with the initialization,
the BGD actually converges to the global minima. The performance of
semiblind algorithms are worse than that of the corresponding blind
ones. It is expectable since the pilots may deviate the semiblind
costs from a good result due to the instantaneous noise value on the
pilots \cite{Zarzoso2005a}. With 16-QAM inputs, the blind cost appears
to have many local minima and the BGD-CMA cannot converge to good
local minima even when the data length is considered sufficient.
The pilots help the BGD-SB-CMA to eliminate many unsatisfactory local
minima. However, it still cannot achieve global minimum. On the
other hand, the CO-CMA and the CO-SB-CMA have substantially better performance
than the BGD counterpart. Comparing the CO-CMA and the CO-SB-CMA for large $K$, i.e., $K$=500, we can see that the performance difference is negligible even when we increase the pilot length.
This indicates that the CO-CMA converge to global minimum. For smaller
$K$, as $K$ decreases, the performance of all the convex optimization based algorithms degrade since the estimation
of the statistics become less accurate. With only a few pilot symbols, the
SB-CO-CMA reduces ISI substantially, especially when the data length is small.

\begin{figure}[t]
\begin{centering}
\includegraphics[scale=0.60]{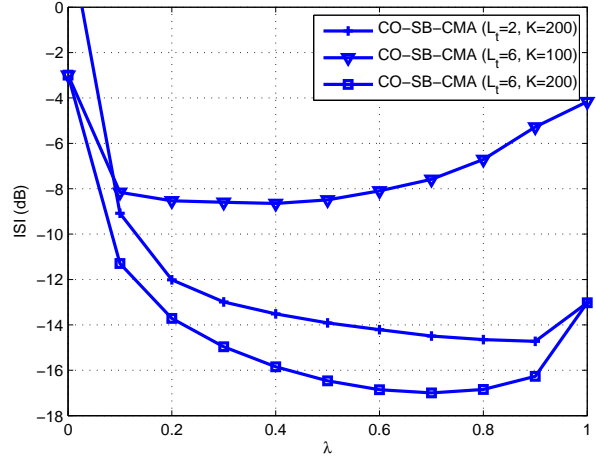} 
\par\end{centering}

\protect\protect\caption{ISI vs weighting factor $\lambda$ for CO-SB-CMA for SISO channel
and 16-QAM input (SNR=$14$dB).}

\label{fig:ISI vs weight SB-CO-CMA SISO 16QAM} 
\end{figure}

\begin{figure}[t]
\begin{centering}
\includegraphics[scale=0.60]{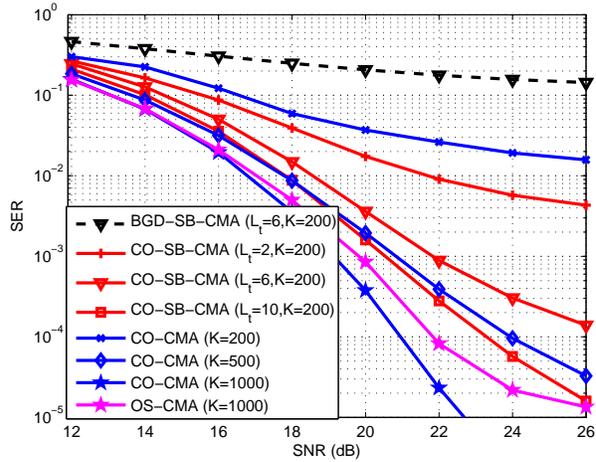} 
\par\end{centering}

\protect\protect\caption{SER vs SNR comparison between CMA and the SB-CMA using convex optimization
(SISO channel, 16 QAM signal).}

\label{fig:SER vs SNR SB-CO-CMA SISO 16QAM} 
\end{figure}

We also compare the CO-CMA and the optimal stepsize CMA algorithm in \cite{Zarzoso2005a} and \cite{Zarzoso2008} (marked as "OS-CMA"). For the chosen simulation parameter set, the CO-CMA outperforms the OS-CMA slightly.

Figure \ref{fig:ISI vs weight SB-CO-CMA SISO 16QAM} shows the ISI
versus $\lambda$ of the SB-CO-CMA for several setups. The semiblind
algorithm always outperforms the blind counterpart and the training
based one. The result also indicates that the weighting factor must
be carefully chosen . The optimum weight depends on the choice of
the length of training symbols and the length of available data.

Figure \ref{fig:SER vs SNR SB-CO-CMA SISO 16QAM} shows the symbol
error rate (SER) versus SNR performance for the blind and semiblind
algorithms using convex optimization under 16-QAM signal. We can see
that the CO algorithms improve the performance significantly when
comparing with the BGD algorithm. Figure \ref{fig:SER vs SNR SB-CO-CMA SISO 16QAM}
also confirms that with the help of a few pilot symbols, the semiblind
algorithms significantly reduce the need for long data sequence. In
particular, with only $6$ or $10$ pilot symbols, the data length
$K$ for the SB algorithm is only $200$ compare to $500$ for the
blind algorithm. Therefore, the CO processing technique is suitable
for short burst data transmission. 
Compared to OS-CMA, the performance of CO-CMA is significant better. This is because the cost for 16-QAM appears to have many local minima and OS-CMA does not guarantee to converge to the global one.

\subsection{Blind source separation problem}

In this part, we investigate the source separation ability of the
new algorithms. We test the algorithms on a $4\times4$ mixing matrix
for source separation problem. In this case, by limiting $k=0$ in
the performance metric in \eqref{eq:ISI performance metric}, we define
the measure metric as the calculated normalized cross-channel interference
(NCCI) for source $i$. 
\begin{equation}
\text{NCCI}_{i}=\frac{\sum_{j}\left|c_{i,j}\right|^{2}-\text{max}_{j}\left|c_{i,j}\right|^{2}}{\text{max}_{j}\left|c_{i,j}\right|^{2}}.
\end{equation}
In the following simulation results, we use the average NCCI of all
the combined channels to compare different simulation setups.

In our example, the mixing matrix $\mathbf{H}$ is $4\times4$ with
entries 
\begin{align*}
\mathbf{H} &\!\! =\!\!\!\left[\begin{array}{cccc}
\!\!\!0.41+0.05j & \!\!\!\!\!\!0.45+0.62j & \!\!\!0.26+0.92j & \!\!-0.25-0.61j\\
\!\!\!0.52-1.11j & \!\!\!\!\!1.04-0.12j & \!\!\!0.06+0.66j & \!\!-0.81+0.21j\\
\!\!\!0.07-0.80j &\!\!\!\! 1.30+0.33j & \!\!\!1.40+0.65j &\!\! -0.05+0.94j\\
\!\!\!0.47-1.08j &\! \!\!0.83+0.43j & \!\!\!0.94-0.08j & \!\!0.57+0.19j
\end{array}\!\!\!\!\right]
\end{align*}

Each of the 4 channel outputs has AWG noise with SNR of $10$dB. We
apply the CMA algorithms without cross-cumulant jointly with the Gram-Schmidt
orthogonalization process to separate different sources \cite{Ding2000a}.
For BGD algorithms, the initial values for $\{\mathbf{w}_{i},\; i=1,2,3,4\}$
are $\left[1\:0\:0\:0\right]^{T}$, $\left[0\:1\:0\:0\right]^{T}$,
$\left[0\:0\:1\:0\right]^{T}$, $\left[0\:0\:0\:1\right]^{T}$. The
step size is set at $10^{-3}$ for $4$-QAM and $10^{-5}$ for $16$-QAM
signals and the update iterations are long enough for the best NCCI
performance. For the semiblind algorithm, the pilot for each data
stream is chosen orthogonal. Although the training symbols can do
the source separation task, we still use the cross correlation cost
as the semiblind cost for the second channel given by $J_{\text{sb},2}=(1-\lambda)J_{\text{t},2}+\lambda J_{\text{b},2}+\lambda_{\text{cr}}J_{\text{cr},2,1}$
with $\lambda_{\text{cr}}=1$. If $\lambda=1$, the semiblind algorithms
reduce to the blind algorithms. This semiblind cost is also used in
the simulation for MIMO channel.

\begin{figure}[t]
\begin{centering}
\includegraphics[scale=0.60]{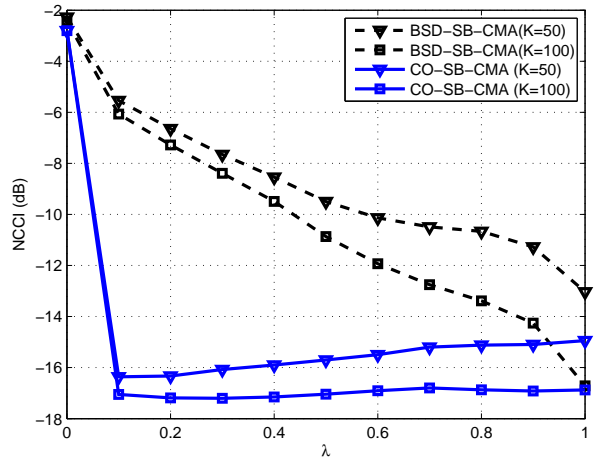} 
\par\end{centering}

\protect\protect\caption{NCCI vs weighting factor for the mixing matrix with QPSK signals ($L_{t}=4)$.}

\label{fig:NCCI vs weight SB-CO-CMA 4x4 4QAM} 
\end{figure}

\begin{figure}[t]
\begin{centering}
\includegraphics[scale=0.60]{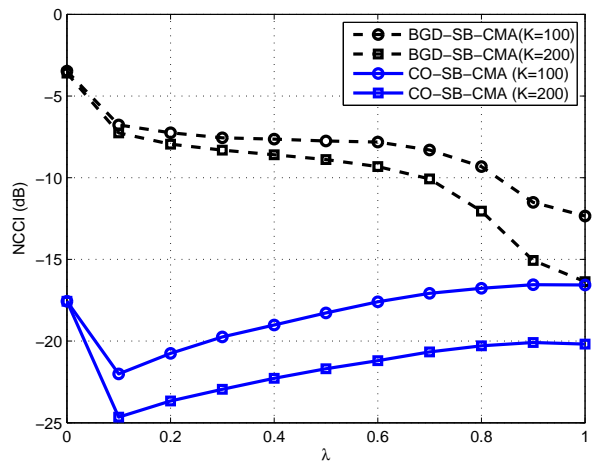} 
\par\end{centering}

\protect\protect\caption{NCCI vs weighting factor for the mixing matrix with 16-QAM signal
($L_{t}=4)$.}

\label{fig:NCCI vs weight SB-CO-CMA 4x4 16QAM } 
\end{figure}
Figures \ref{fig:NCCI vs weight SB-CO-CMA 4x4 4QAM} and \ref{fig:NCCI vs weight SB-CO-CMA 4x4 16QAM }
compare the NCCI between the semiblind algorithms using CMA cost when
the channel inputs are $4$-QAM and $16$-QAM, respectively. The solutions
are achieved by the proposed convex optimization or by the BGD. For
the BGD the pilot cost is the linear least square (LS) cost which
is commonly used. We can see that for short data length, the convex
optimization approach outperforms the BGD one. In this scenario, few
pilots do not help to improve the performance of BGD. In addition,
it creates undesired local minima that the BGD may converge to. Meanwhile,
despite of the bad local minima, the performance of the CO-SB-CMA is superior
to the BGD algorithms. For 16-QAM, the NCCI is lower as the $\lambda$
is decreased. This implies that for few observations ($K=100,200)$
the estimation of high-order statistics is not accurate and the semiblind
algorithms rely on pilots to achieve good performance. For QPSK case,
we observe the same behavior although the performance difference between
the CO-SB-CMA and the CO-CMA is smaller.

Observing figures \ref{fig:NCCI vs weight SB-CO-CMA 4x4 4QAM} and
\ref{fig:NCCI vs weight SB-CO-CMA 4x4 16QAM }, we notice an interesting
phenomenon about the pilot based costs. When $\lambda=1$, the SB-BGD-CMA
cost reduces to the linear least square cost. If the pilot length
is short compared with the equalizer length, as the algorithm converges,
over fitting problem occurs. It is because in the LS cost there are
only $L_{\text{t}}$ square terms. Meanwhile, for the SB-CO-CMA, the fourth
order pilot cost has $L_{\text{t}}(L_{\text{t}}-1)/2$ square terms
making this cost more resilient to the overfitting problem. That is
the reason why even with only a few pilot symbols, the fourth order cost
outperforms the LS cost. We also notice that this performance gap
is more significant for $16$-QAM input. This phenomenon may need
further analysis.

\subsection{MIMO channel}

\begin{table}[t]
{\small{{{}{}\protect\protect\caption{{\small{{{}{}The $2\times2$ MIMO channel coefficients ( $L_{h}=2$)\label{tab:MIMO channel}}}}}
}}{\small \par}

{\small{}}}{\small \par}

{\small{{{}{}\medskip{}
 }}}{\small \par}

\centering{}%
\begin{tabular}{|c|c|c|c|c|}
\hline 
{\small{{{}{}$k$}}}  & {\small{{{}{}$h_{1,1}\left(k\right)$}}}  & {\small{{{}{}$h_{1,2}\left(k\right)$}}}  & {\small{{{}{}$h_{2,1}\left(k\right)$}}}  & {\small{{{}{}$h_{2,2}\left(k\right)$}}}\tabularnewline
\hline 
\hline 
{\small{{{}{}0}}}  & {\small{{{}{}$-0.2+0.1j$}}}  & {\small{{{}{}$.1j$}}}  & $0.1j$  & $1$\tabularnewline
\hline 
{\small{{{}{}1}}}  & {\small{{{}{}$1$}}}  & {\small{{{}{}$0.2$}}}  & $0.1$  & $0.1j$\tabularnewline
\hline 
{\small{{{}{}2}}}  & {\small{{{}{}$0.2j$}}}  & {\small{{{}{}$0.11$}}}  & $0.2j$  & $0.1+0.1j$\tabularnewline
\hline 
\end{tabular}
\end{table}

\begin{figure}[t]
\begin{centering}
\includegraphics[scale=0.60]{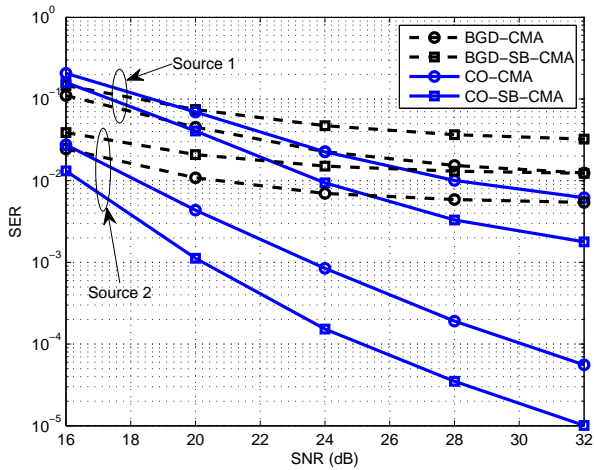} 
\par\end{centering}

\protect\protect\caption{SER vs SNR comparison between CMA and the SB-CMA using convex optimization
for MIMO channel, 16 QAM signal ($K=500$, $L_{\text{t}}=4$)}

\label{fig:SER vs SNR SB-CO-CMA MIMO 16QAM} 
\end{figure}

In this test, we apply our CO approaches using CMA and SB-CMA on a
MIMO channel. We consider a $2\times2$ FIR channel whose coefficients
are shown in Table \ref{tab:MIMO channel}. The signal inputs are
16-QAM and the additive white Gaussian noise is added to form channel
outputs. For each sub-channel, an FIR equalizer of order $L_{w}=3$
is applied. We choose orthogonal pilot sequences of length $L_{\text{t}}=4$
symbols each. The weighting factor $\lambda=0.5$ for SB algorithms
is considered. We compare the CO algorithms with the BGD algorithms. The
stepsize for the BGD is optimized to achieve minimum ISI. The initial
coefficient set is single unit spike at its near-center tap for $\mathbf{w}_{i,i}$
and zeros otherwise: 
\begin{eqnarray}
\mathbf{w}_{i,j} & = & \begin{cases}
[0\;1\;0\;0]^{T} & \text{for }i=j,\\
{}[0\;0\;0\;0]^{T} & \text{for }i\neq j.
\end{cases}
\end{eqnarray}

Figure \ref{fig:SER vs SNR SB-CO-CMA MIMO 16QAM} compares SER performance
of BGD and CO implementations as functions of SNR for CMA and SB-CMA
equalizer for each data source. The results confirm that the CO implementation
also outperforms the BGD in MIMO scenario. Similar to SISO and source
separation scenarios, with only a few pilot symbols, the CO-SB-CMA equalization
achieves better performance over the CO-CMA equalization.

\section{Conclusions\label{sec:CO-Conclusions}}

We formulated various blind channel equalization and source separation costs utilizing only fourth order and second order
statistics into convex semi-definite optimization using real-valued data and parameters. 
Compared to the precedent work in \cite{Maricic2003}, our formulation is more compact and resource saving, thereby, more
efficient. We also proposed a post-processing technique that is more suitable in practice.
Our formulation can be applied for signals of high-order QAM constellation in general MIMO systems. 
The global solution is found without requirement of having good initialization 
as the conventional implementation using gradient descent method.
The number of data required for successfully equalizing the channels or separating sources is low compared to other batch algorithms making this method suitable for short packet transmission under fast fading channels. 
We also proposed a fourth order training based cost and form semi-blind algorithms. 
Using only a few pilot symbols, the semi-blind algorithms outperform the blind cost counterpart and other semi-blind algorithms. 

\begin{appendices} \bibliographystyle{ieeetran}
\bibliography{IEEEabrv,BlindEqualization,math,MyWorks,reference,Semiblind,Standard}

\section{\label{app:CO-Jcr}}

In this appendix, we provide detail calculation for the cross-correlation
term $J_{\text{cr},i,j}$. Note that, $\mathbf{w}_{i}$ for $i<j$
is already known. First, we calculate the cross-correlation as a function
of $\mathbf{v}_{j}=\mbox{qvec}\left(\mathbf{u}_{j}\right)$. Now
\begin{align}
\left|E\left[y_{i}\left(k\right)y_{j}^{*}\left(k-l\right)\right]\right|^{2}&=\left|E\left[\mathbf{w}_{i}^{H}\mathbf{x}\left(k\right)\mathbf{x}^{H}\left(k-l\right)\mathbf{w}_{j}\right]\right|^{2} \nonumber \\
 &=\left|\mathbf{w}_{i}^{H}E\left[\mathbf{x}\left(k\right)\mathbf{x}^{H}\left(k-l\right)\right]\mathbf{w}_{j}\right|^{2}.
\end{align}
Since $\mathbf{w}_{i}$ is known, we can pre-calculate $\mathbf{p}_{i,l}=\left(\mathbf{w}_{i}^{H}E\left[\mathbf{x}\left(k\right)\mathbf{x}^{H}\left(k-l\right)\right]\right)^{H}$.
Let 
\begin{align*}
\mathbf{p}_{\text{r},i,l} & =\left[\begin{array}{cc}
\text{Re}\left\{ \mathbf{p}_{i,l}^{T}\right\}  & \text{Im}\left\{ \mathbf{p}_{i,l}^{T}\right\} \end{array}\right]^{T},\\
\mathbf{p}_{\text{i},i,l}^{T} & =\left[\begin{array}{cc}
\text{Im}\left\{ \mathbf{p}_{i,l}^{T}\right\}  & -\text{Re}\left\{ \mathbf{p}_{i,l}^{T}\right\} \end{array}\right]^{T},
\end{align*}
we have
\begin{align}
\left|E\left[y_{i}\left(k\right)y_{j}^{*}\left(k-l\right)\right]\right|^{2} & =\left|\mathbf{p}_{i,l}^{H}\mathbf{w}_{j}\right|^{2} \nonumber \\
 & =\mathbf{u}_{j}^{T}\left(\mathbf{p}_{\text{r},i,l}\mathbf{p}_{\text{r},i,l}^{T}+\mathbf{p}_{\text{i},i,l}\mathbf{p}_{\text{i},i,l}^{T}\right)\mathbf{u}_{j} \nonumber \\
 & =\mathbf{q}_{i,l}^{T}\mathbf{v}_{j}
\end{align}
with $\mathbf{q}_{i,l}=\mbox{svec}\left(\mathbf{p}_{\text{r},i,l}\mathbf{p}_{\text{r},i,l}^{T}+\mathbf{p}_{\text{i},i,l}\mathbf{p}_{\text{i},i,l}^{T}\right)$.
Therefore 
\begin{align}
\sum_{i=1}^{j-1}J_{\text{cr},i,j} & =\sum_{i=1}^{j-1}\sum_{l=0}^{\delta}\mathbf{q}_{i,l}^{T}\mathbf{v}_{j}\nonumber \\
 & =\mathbf{q}_{j}^{T}\mathbf{v}_{j}
\end{align}
where $\displaystyle \mathbf{q}_{j}=\sum_{i=1}^{j-1}\sum_{l=0}^{\delta}\mathbf{q}_{i,l}.$

\section{\label{app:CO-Jt}}

In this appendix, we present the details of calculation for $\mathbf{C}_{\text{t}}$, $\mathbf{b}_{\text{t}}$, $a_{\text{t}}$
in Section \ref{sec:CO-Generalize SWA MED Training}.

Since 
\begin{align*}
\mathbf{u}^{T}\mathbf{x}_{\text{t}}(p)\mathbf{x}_{\text{t}}^{T}(q)\mathbf{u}&=\mathbf{u}^{T}\mathbf{x}_{\text{t}}(q)\mathbf{x}_{\text{t}}^{T}(p)\mathbf{u}\\
&=\mathbf{u}^{T}\left[\frac{\mathbf{x}_{\text{t}}(p)\mathbf{x}_{\text{t}}^{T}(q)}{2}+\frac{\mathbf{x}_{\text{t}}(q)\mathbf{x}_{\text{t}}^{T}(p)}{2}\right]\mathbf{u},
\end{align*}
we can write

\begin{equation}
J_{\text{t}}=\sum_{p=1}^{2L_{\text{t}}}\sum_{q=p}^{2L_{\text{t}}}\left[\mathcal{X}^{T}(p,q)\mathbf{v}-s_{\text{t}}(p)s_{\text{t}}(q)\right]^{2}
\end{equation}
where $\mathcal{X}(p,q)=\mbox{svec}\left(\frac{\mathbf{x}_{\text{t}}(p)\mathbf{x}_{\text{t}}^{T}(q)}{2}+\frac{\mathbf{x}_{\text{t}}(q)\mathbf{x}_{\text{t}}^{T}(p)}{2}\right)$.

The cost can be written as the second order function of $\mathbf{v}$:

\begin{align}
J_{\text{t}} & =\sum_{p=1}^{2L_{\text{t}}}\sum_{q=p}^{2L_{\text{t}}}\mathbf{v}^{T}\mathcal{X}(p,q)\mathcal{X}^{T}(p,q)\mathbf{v}\\
&-\sum_{p=1}^{2L_{\text{t}}}\sum_{q=p}^{2L_{\text{t}}}2s_{\text{t}}(p)s_{\text{t}}(q)\mathcal{X}^{T}(p,q)\mathbf{v}+\sum_{p=1}^{2L_{\text{t}}}\sum_{q=p}^{2L_{\text{t}}}s_{\text{t}}(p)s_{\text{t}}(q)\nonumber \\
 & =\mathbf{v}^{T}\mathbf{C}_{\text{t}}\mathbf{v}-\mathbf{b}_{\text{t}}\mathbf{v}+a_{\text{t}}
\end{align}

where

\begin{equation}
\mathbf{C}_{\text{t}}=\sum_{p=1}^{2L_{\text{t}}}\sum_{q=p}^{2L_{\text{t}}}\mathcal{X}(p,q)\mathcal{X}^{T}(p,q),
\end{equation}

\begin{equation}
\mathbf{b}_{\text{t}}=\sum_{p=1}^{2L_{\text{t}}}\sum_{q=p}^{2L_{\text{t}}}2s_{\text{t}}(p)s_{\text{t}}(q)\mathcal{X}^{T}(p,q),
\end{equation}

\begin{equation}
a_{\text{t}}=\sum_{p=1}^{2L_{\text{t}}}\sum_{q=p}^{2L_{\text{t}}}s_{\text{t}}(p)s_{\text{t}}(q).
\end{equation}

\end{appendices} 

\vspace*{-2\baselineskip}
\begin{IEEEbiography}
[{\includegraphics[width=1in,height=1.25in,clip,keepaspectratio]{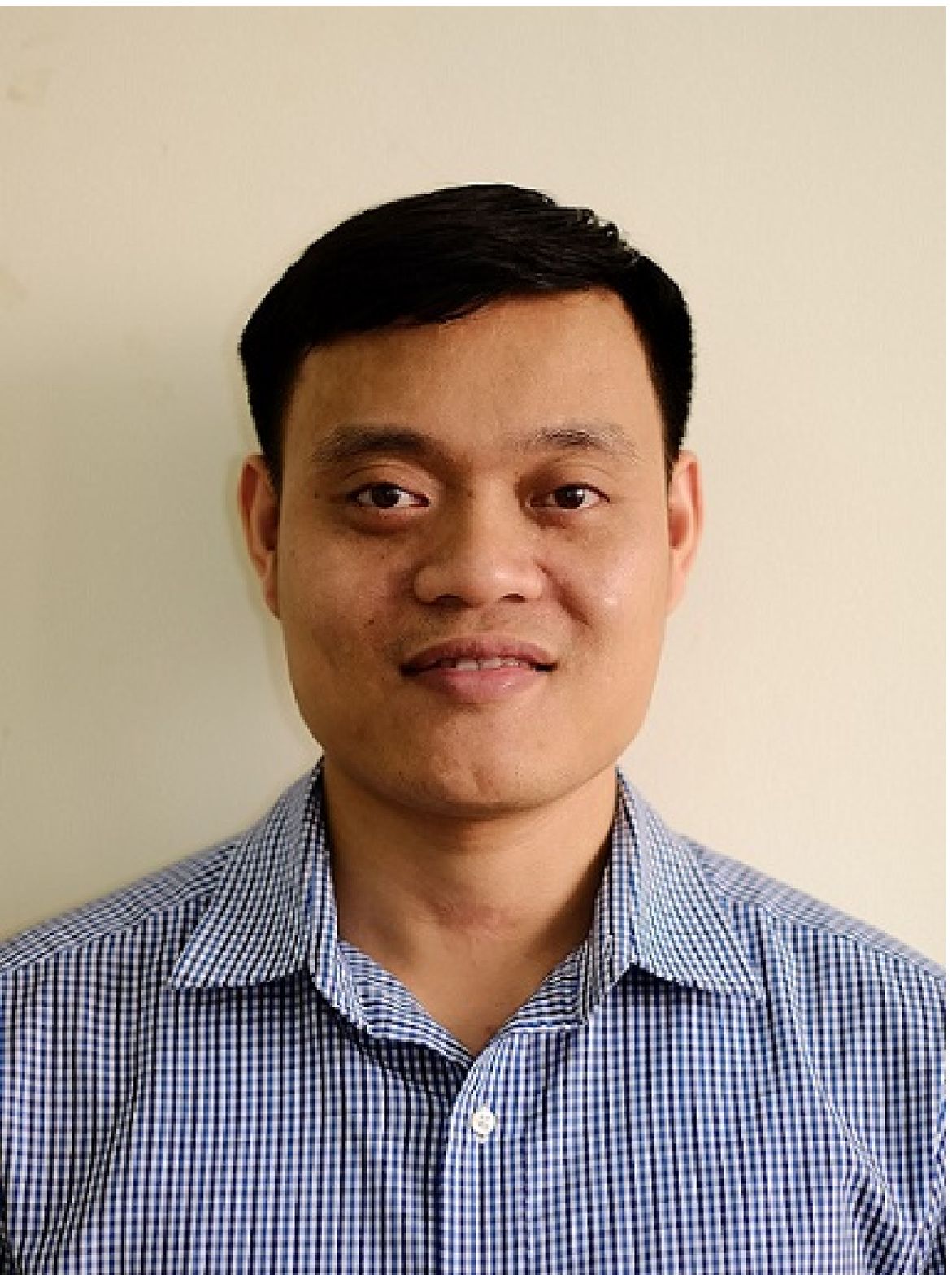}}]{Huy-Dung Han}
Huy-Dung Han received B.S. in 2001 from Faculty of Electronics and Telecommunications at Hanoi University of Science and Technology, Vietnam, M. Sc. degree in 2005 from Technical Faculty at University Kiel, Germany and Ph.D. degree  in 2012, from the Department of Electrical and Computer Engineering at the University of California, Davis, USA. 
His research interests are in the area of wireless communications and signal processing,
with current emphasis on blind and semi-blind channel equalization for single and multi-carrier communication systems, convex optimization.
Dr. Han is with School of Electronics and Telecommunications, Department of Electronics and Computer Engineering at Hanoi University of Science and Technology, Hanoi, Vietnam. 
He has been serving on technical programs of IEEE International Conference on Communications and Electronics.
\end{IEEEbiography}

\vspace*{-2\baselineskip}
\begin{IEEEbiography}
[{\includegraphics[width=1in,height=1.25in,clip,keepaspectratio]{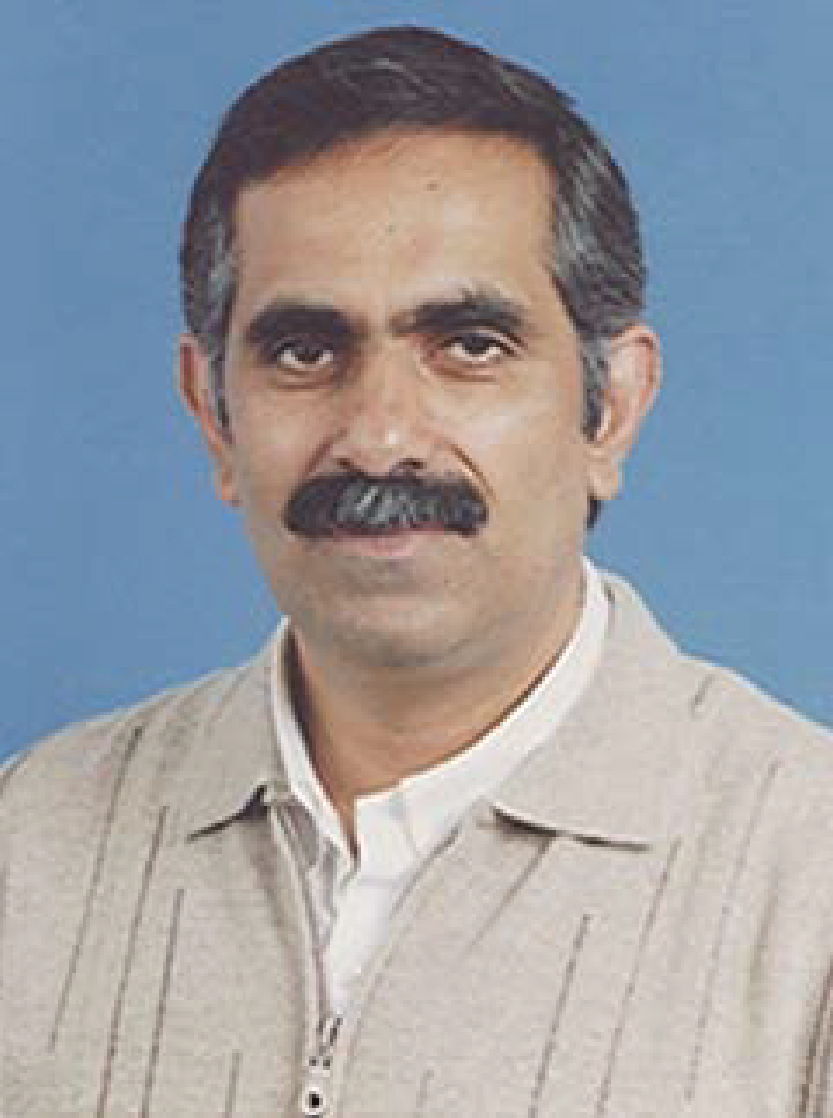}}]{Muhammad Zia}
Muhammad Zia received M. Sc. degree in 1991 and M.Phil degree  in 1999, both from Department of Electronics at Quaid-e-Azam University, Islamabad, Pakistan. He received PhD degree from the Department of Electrical and Computer
Engineering at the University of California, Davis in 2010. His research
interests are in the area of wireless communications and signal processing,
with current emphasis on the wireless security, compressive sensing, bandwidth efficient transceiver design and blind and semi-blind detection of Space-Time Block Codes. Dr. Zia is with Department of Electronics at Quaid-i-Azam University, Islamabad.

\end{IEEEbiography}
\vspace*{-2\baselineskip}
\begin{IEEEbiography}
[{\includegraphics[width=1in,height=1.25in,clip,keepaspectratio]{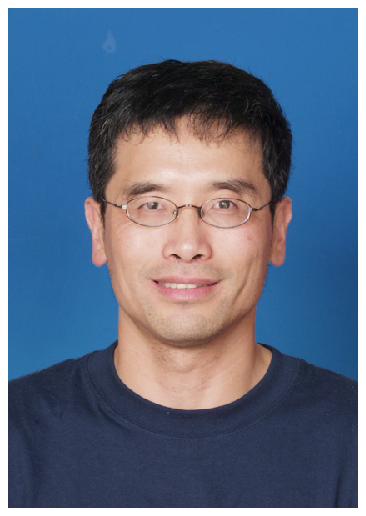}}]{Zhi Ding}
Zhi Ding (S'88-M'90-SM'95-F'03) is Professor of Electrical and Computer Engineering at the University of California, Davis. He received his Ph.D. degree in Electrical Engineering from Cornell University in 1990. From 1990 to 2000, he was a faculty member of Auburn University and later, University of Iowa. Prof. Ding has held visiting positions in Australian National University, Hong Kong University of Science and Technology, NASA Lewis Research Center and USAF Wright Laboratory. Prof. Ding has active collaboration with researchers from areas including Australia, China, Japan, Canada, Finland, Taiwan, Korea, Singapore, and Hong Kong.

Dr. Ding is a Fellow of IEEE and has been an active volunteer, serving on technical programs of several workshops and conferences. He was associate editor for IEEE Transactions on Signal Processing from 1994-1997, 2001-2004, and associate editor of IEEE Signal Processing Letters 2002-2005. He was a member of technical committee on Statistical Signal and Array Processing and member of Technical Committee on Signal Processing for Communications (1994-2003).  Dr. Ding was the Technical Program Chair of the 2006 IEEE Globecom. He was also an IEEE Distinguished Lecturer (Circuits and Systems Society, 2004-06, Communications Society, 2008-09). He served on as IEEE Transactions on Wireless Communications Steering Committee Member (2007-2009) and its Chair (2009-2010). Dr. Ding received the 2012 IEEE Wireless Communication Recognition Award from the IEEE Communications Society and is a coauthor of the text: Modern Digital and Analog Communication Systems, 4th edition, Oxford University Press, 2009. 
\end{IEEEbiography}

\end{document}